\newcommand{\eg}{{\it e.g.,}\xspace}
\newcommand{\etal}{{\it et~al.}}
\newcommand{\ie}{{\it i.e.,}\xspace}\soulregister{\ie}{7}
\newcommand{\etc}{{\it etc.}}
\newcommand{\ci}{{\it (i) }}
\newcommand{\cii}{{\it (ii) }}
\newcommand{\ciii}{{\it (iii) }}
\newcommand{\ca}{{\it (a) }}
\newcommand{\cb}{{\it (b) }}
\newcommand{\cc}{{\it (c) }}
\newcommand{\cd}{{\it (d) }}
\newcommand{\ce}{{\it (e) }}
\newcommand{\cf}{{\it (f) }}
\newcommand{\cg}{{\it (g) }}
\newcommand{\ch}{{\it (h) }}
\newcommand{\cj}{{\it (j) }}
\theoremstyle{definition}
\newtheorem{definition}{Definition}
\newtheorem{example}{Example}
\newtheorem{theorem}{Theorem}
\newtheorem{observation}{Observation}
\renewcommand{\qedsymbol}{\hfill$\blacksquare$}
\newenvironment{tight_enumerate}{
\begin{enumerate}[leftmargin=18pt]
\vspace{-0.2\baselineskip}
  \setlength{\itemsep}{0.3\baselineskip}
  \setlength{\parskip}{0pt}
}{\vspace{-0.2\baselineskip}\end{enumerate}}
\newcommand{\ATTCK}{{\sffamily ScheduLeak}\xspace}
\soulregister{\ATTCK}{7} 
\newcommand{\ATTCKNF}{{ScheduLeak}\xspace}
\soulregister{\ATTCKNF}{7} 
\newcommand{\BUDGET}{\lambda}
\newcommand{\ARRVCOL}{\delta_v}
\title{A Novel Side-Channel in Real-Time Schedulers} 
\author{
\IEEEauthorblockN{Chien-Ying Chen\IEEEauthorrefmark{1}, Sibin Mohan\IEEEauthorrefmark{1}, Rodolfo Pellizzoni\IEEEauthorrefmark{2}, Rakesh B. Bobba\IEEEauthorrefmark{3} and Negar Kiyavash\IEEEauthorrefmark{4}} 
\IEEEauthorblockA{\IEEEauthorrefmark{1}Deptartment of Computer Science, University of Illinois at Urbana-Champaign, Urbana, IL, USA}
\IEEEauthorblockA{\IEEEauthorrefmark{2}Deptartment of Electrical and Computer Engineering, University of Waterloo, Ontario, Canada}
\IEEEauthorblockA{\IEEEauthorrefmark{3}School of Electrical Engineering and Computer Science, Oregon State University, Corvallis, OR, USA}
\IEEEauthorblockA{\IEEEauthorrefmark{4}Deptartment of Electrical and Computer Engineering, University of Illinois at Urbana-Champaign, Urbana, IL, USA}
Email: \{\IEEEauthorrefmark{1}cchen140,
\IEEEauthorrefmark{1}sibin, \IEEEauthorrefmark{4}kiyavash\}@illinois.edu,
\IEEEauthorrefmark{2}rodolfo.pellizzoni@uwaterloo.ca,
\IEEEauthorrefmark{3}rakesh.bobba@oregonstate.edu
}
\begin{document}

\maketitle

\thispagestyle{plain}
\pagestyle{plain}

\begin{abstract}
We demonstrate the presence of a novel scheduler side-channel in preemptive, fixed-priority real-time systems (RTS); examples of such systems can be found in automotive systems, avionic systems, power plants and industrial control systems among others. This side-channel can leak important timing information such as the future arrival times of real-time tasks.
This information can then be used to launch devastating
attacks, two of which are demonstrated here (on real hardware
platforms). Note that it is not easy to capture this timing information due to runtime variations in the schedules, the presence of multiple other tasks in the system and the typical constraints (\eg deadlines) in the design of RTS. Our \ATTCK algorithms demonstrate how to effectively exploit this side-channel. A complete implementation is presented on real
operating systems (in Real-time Linux and FreeRTOS). 
Timing information leaked by \ATTCKNF can significantly aid
other, more advanced, attacks in better accomplishing their goals.

\end{abstract}

\section{Introduction}
\label{sec::intro}

\begin{figure*}[t]
\centering
\includegraphics[width=0.93\textwidth]{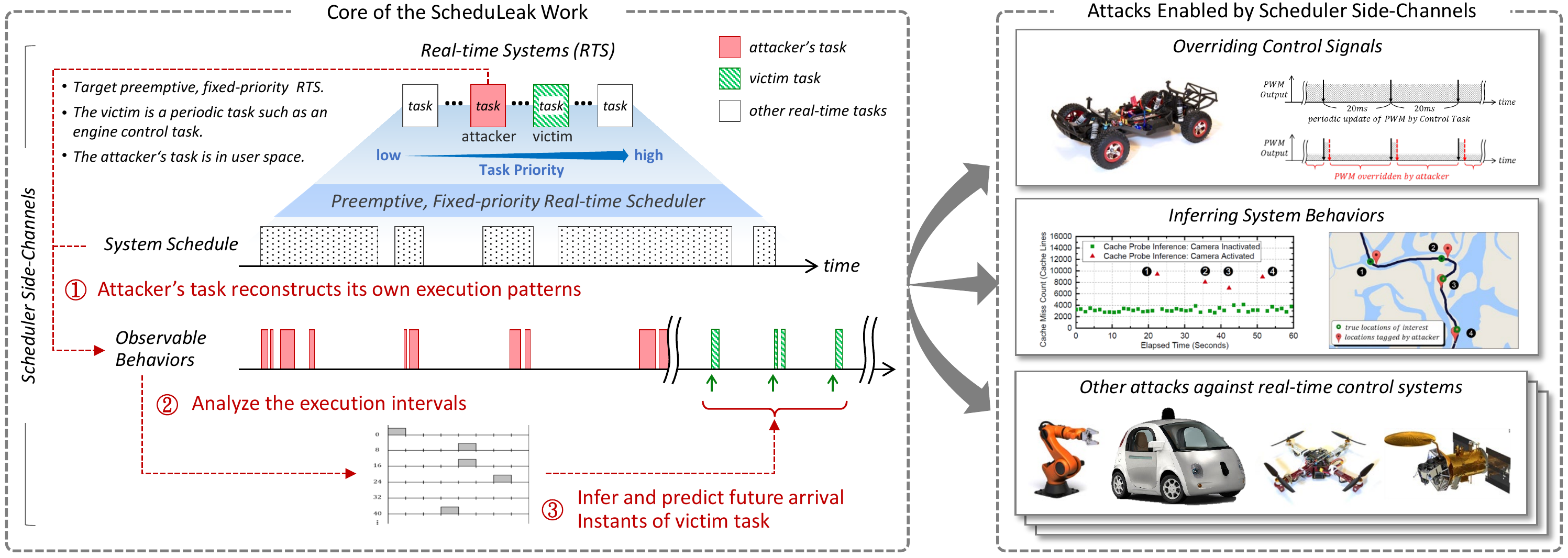}
\vspace{-0.3\baselineskip}
\caption{Overview of paper: We demonstrate how an unprivileged, low-priority task (in user space) can use the \ATTCKNF algorithms to infer execution behaviors of critical, high-priority periodic task(s). 
The extracted information is useful for helping other attacks achieve their primary goals (two such attack instances are implemented in this paper as possible use cases).
}
\label{fig:paper_overview}
\vspace{-1\baselineskip}
\end{figure*}


Consider the scenario where an adversary wants to attack an embedded real-time system (RTS) --
parts of autonomous cars, industrial robots, anti-lock braking systems in
modern cars, unmanned aerial vehicles (UAVs), power grid components, 
the NASA rovers, 
implanted medical devices, \textit{etc.} 
These systems
typically have limited memory and processing power, have very regimented
designs (stringent timing constraints for instance) and any unexpected actions
can be quickly thwarted.
Therefore, the opportunity to either steal a critical piece
of information or the ability to launch that attack which takes control of the system
can be very limited. As a consequence, attacks on such systems require significant system specific information. 
This ``information'' can take many forms -- from an
understanding of the design of the system, to knowledge of the critical
components (either software or hardware). The exact knowledge depends on the
type of attack and the target component. For example, say, \ca to steal important
information about when (and where) an on-board camera is used for
reconnaissance or \cb to take control away from the ground operator of a
remotely-controlled vehicle.

The one common underlying theme that pervades real-time systems (and something
that a would-be attacker should definitely address) is the importance of {\em
timing}.
``Timing'' includes:
\ci when certain events occur, \cii how often they occur, and, most
importantly for this paper, \ciii {\em when (and if) they will occur again in the
future}. In fact, a number of critical software components in real-time systems
are \textit{periodic} in nature. 
As we shall see, 
these periodic tasks 
present themselves as prime targets for attackers. 

So, how does one attack such systems, especially the periodic (and critical) 
components\footnote{We shall see potential end results of such attacks 
in Section \ref{sec:casestudies}.}? 
\newenvironment{myquotation}{\setlength{\leftmargini}{1.5em}\quotation}{\endquotation}
\begin{myquotation}
	\noindent\textit{We have discovered the presence of a scheduler-based side-channel 
	that leaks timing information in real-time systems -- in
	particular those with fixed priority tasks}. 
\end{myquotation}
The scheduler-based side-channel enables an \emph{unprivileged, low-priority task} to learn the precise timing behavior of the critical, periodic (victim) task(s) by simply observing its own execution intervals using a system timer.
This provides an attacker
with the ability to {\em infer the initial offset of the victim task and precisely predict its future arrival times at runtime}\footnote{In this paper, we do not focus on inferences of other task timing behaviors such as job start times or job completion times.}. We name the algorithms that exploit this
side-channel attack, ``\ATTCK''.

Figure \ref{fig:paper_overview} presents an overview of the side-channel and
also how the attacker can benefit from the scheduler side-channel-based information. 
The left side of the figure shows how a real-time system
consisting of fixed-priority tasks (the boxes at the top -- the victim is a periodic task while all other tasks can be either periodic or sporadic) that results in a
schedule (dotted boxes in the middle, with each task being indistinguishable from
the other at runtime) can be analyzed to extract the precise future arrival time points
(the green, upward arrows) of the victim task. The right-hand side of the
figure 
shows how this timing information of the critical task can be used to
launch other attacks that either leak more important information 
or destabilize the real-time control system. 
Note that without this precise timing information, an attacker is either
forced to guess when the victim task(s) will execute or launch the attacks at
random points in time -- both of which dilute the efficacy of the attack or
result in early termination of the system.

The extraction of this runtime timing information is non-trivial; 
main reasons include \ca the runtime schedule depends heavily on the state of the system at startup, initialization variables and environmental conditions and \cb real-time systems typically include multiple non-real-time tasks as well.
Even precise knowledge of \textit{all} statically-known system parameters is
insufficient to reconstruct the future arrival times of the victim. 
While a privileged attacker could target the scheduler of the system and extract the
requisite information, such access typically requires significant effort and/or
resources.  On the other hand, we are able to reconstruct the information with
the same level of precision using an \textit{unprivileged user space
application}.  
This is achieved by letting the attacker's application keep track of its own scheduling information.
Coupled with some easily obtainable information about the system (\textit{e.g.,} the victim task's period), the attacker can recreate the targeting timing information with high precision.

To be more specific, let's say that we want to override the (remote) control of
a rover. In many such systems, a periodic pulse width modulation (PWM) task
drives the steering and throttle. Without knowledge of 
when the PWM task is likely to update the motor control values,
the attacker is forced to employ brute force or random strategies to
override the PWM values. These could either end up being ineffective or lead to
the entire system being reset before the attack succeeds (see Section
\ref{sec:casestudies} for more details on this and another scenario). Armed
with knowledge from \ATTCK, our smart adversary can now override the PWM values
{\em right after} they have been written by the corresponding task --
effectively overriding the actuation commands.


Scheduler covert channels, where two processes covertly communicate using the
scheduler, have long been known (\eg~\cite{embeddedsecurity:volp2008,
ghassami2015capacity, embeddedsecurity:son2006}). In contrast, our focus is on
a \textit{side-channel that leaks execution timing behavior (not deliberately, as opposed to the scheduler covert channels)
of critical, high-priority real-time tasks to unprivileged, low-priority
tasks}. We focus on uniprocessor (\textit{i.e.,} single-core) systems with a preemptive,
fixed-priority real-time task scheduler~\cite{Buttazzo:1997:HRC,LiuLayland1973}
since they are the most common class of real-time systems deployed in practice
today~\cite{Liu:2000:RTS}. It is important for an attacker to \textit{stay
within the strict execution time budgets} allotted to the unprivileged task
-- especially during the phases
when it is trying to observe and reconstruct the victim's timing behavior.
Failing this requirement will likely cause other critical real-time
functionality to fail or trigger a watchdog timeout that resets the system,
leading to premature ejection of the attacker.  This
property is crucial during the `reconnaissance' phase of what has come to be known as
advanced persistent threat (APT) attacks~\cite{tankard2011advanced,
virvilis2013big}.  \textit{E.g.}, it has been reported that attackers had
penetrated and stayed resident undetected in the system for {\em months} before
they initiated the actual attack in the case of
Stuxnet~\cite{stuxnet:symantecdossier11}.  Once they had enough information
about system internals, they were able to craft effective
attacks tailored to that particular system.

The \ATTCKNF algorithms are implemented on both: \ca real hardware platforms running Real-Time Linux and \text{FreeRTOS} (for the two attack 
case studies) and \cb a simulator. We evaluate the
performance and scalability of \ATTCKNF in Section~\ref{sec::eval}, along with
a design space exploration (on the simulator). The results show that our methods are
effective at reconstructing schedule information and provide valuable
information for later attacks to better accomplish their attack goals.  To
summarize, the main contributions are: 
\begin{tight_enumerate}
\item Novel scheduler side-channel attack algorithms that can accurately
reconstruct the initial offsets and predict future arrival times of critical real-time
tasks in real-time systems (without
requiring privileged access) [Section~\ref{sec::approach}]

\item Analyses and metrics to measure the accuracy in predicting the execution
and timing properties of the victim tasks [Sections~\ref{sec::analysis}
and \ref{sec::eval}].

\item Implementation and case studies on real hardware platforms (\ie autonomous systems) running Real-time Linux and FreeRTOS [Section~{\ref{sec:casestudies}}].
\end{tight_enumerate}

\section{System and Adversary Model}
\label{sec:model}

\subsection{Time Model}
\label{subsec::time_model}

We assume that the attacker has access to a system timer on the target system and therefore time measured by the attacker has the resolution equal\footnote{Section~{\ref{sec:take_over_the_control}} demonstrates a case that the attacker may use a coarser time resolution and the proposed attack  algorithms would still work.} to this system timer.
The timer can be either a software or a hardware timer (\eg a 64-bit \emph{Global Timer} in FreeRTOS or a $\mathtt{CLOCK\_MONOTONIC}$-based timer in Linux).
We consider a discrete time model \cite{isovic2001handling}. 
We assume that a unit of time equals a timer tick (of the timer that the attacker can access) and
the tick count is an integer. 
All system and task parameters are
multiples of a time tick. We denote an interval starting from time point $a$
and ending at time point $b$ that has a length of $b-a$ by $[a,b)$ or $[a,b-1]$.



\subsection{System Model}
\label{sec::rts_model}
We consider a uniprocessor (\textit{i.e.,} single-core), fixed-priority, preemptive real-time system consisting of $n$ real-time tasks $\Gamma=\{\tau_1,...\tau_n\}$. A task can be either a periodic or a sporadic task.
Each task $\tau_i$ is characterized by
$(p_i, d_i, e_i, a_i, pri_i)$  where $p_i$ is the period (or the minimum
inter-arrival time), $d_i$ is the relative deadline,
$e_i$ is the worst-case execution time (WCET),
$a_i$ is the initial task offset (\ie the arrival time) and
$pri_i$ is the priority.  We assume that every task has a distinct period\footnote{This assumption is in line with existing standards in the design of real-time tasks to ensure 
distinct periods/priorities. For example, AUTOSAR (a standardized automotive software architecture) tools map runnables/functions activated by the same period to a single task to reduce context switch/preemption overheads.} and
that a task's deadline is equal to its period~\cite{LiuLayland1973}.
We use the same symbol $\tau_i$ to represent a task's job (or instance) for simplicity of notation.
We assume that task release jitter is negligible. Thus, any two adjacent arrivals of a periodic task $\tau_i$ has a constant distance $p_i$.
%
We further assume that each task is assigned a distinct priority and that the taskset is schedulable by a fixed-priority, preemptive real-time scheduler.
Let $hp(\tau_i)$ denote the set of tasks that have
higher priorities than that of $\tau_i$ and $lp(\tau_i)$ denote the set of tasks that
have lower priorities than $\tau_i$.
We define an ``execution interval'' of a task to be 
an interval of time $[a,b)$ during which the task runs continuously.
If $\tau_i$ is preempted then the execution
will be partitioned into multiple execution intervals, each of which has length less than $e_i$.

\begin{table}[t]\footnotesize
\centering
\caption{A summary of the system and adversary model.}
\vspace{-0.5\baselineskip}
\label{tab:assumptions}
\begin{tabular}{|l||l|}
\hline
\multicolumn{2}{|c|}{\textbf{Real-Time System Assumptions}}                                      \\ \hline 
A1 & A preemptive, fixed-priority real-time scheduler is used. \\ \hline 
A2 & The victim task is a periodic task. \\ \hline\hline
\multicolumn{2}{|c|}{\textbf{Attacker's Capabilities (Requirements)}}                                 \\ \hline
R1 & The attacker has the control of one user-space task (observer \\ & task) that has a lower priority than the victim task. 
\\ \hline
R2 & The attacker has knowledge of the victim task's period.                                                   \\ \hline
R3 & The attacker has access to a system timer on the system.
	\\ \hline \hline
\multicolumn{2}{|c|}{\textbf{Attacker's Goals}}                                      \\ \hline
G1 & Infer the victim task's initial offset and predict future arrivals.
	\\ \hline
\end{tabular}
\vspace{-1\baselineskip}
\end{table}


\subsection{Adversary Model}
\label{sec::adver_model}
We assume that an attacker is interested in targeting {\em one of the 
critical tasks in the system} that we henceforth refer to as a ``\emph{victim task}'', denoted by $\tau_v \in \Gamma$.
We also assume that $\tau_v$ 
is a \textit{real-time, periodic}
task. 
Many critical functions in real-time control systems are periodic
in nature, 
\eg the code that controlled the frequency of the slave variable-frequency
drives in the Stuxnet example~\cite{stuxnet:symantecdossier11}. 
In all such cases, the period of the task is strictly related to the characteristics of the physical system and thereby can be deduced from the physical properties; hence, we can assume that the attacker is able to gain knowledge of the victim task's period beforehand. 
It is common that, before attacking complex systems (\textit{e.g.,} CPS), adversaries will study the design and details of such systems.
However, the attacker does not know the initial conditions at system start-up 
({\eg} the task's initial offset) and may not have information on all the tasks in the system. 
All other tasks in the system can be either periodic, sporadic or non-real-time, depending on the design of the system.
Hence, the methods developed in this paper can target systems that have a mix
of periodic, sporadic and non-real-time tasks.

The ultimate goal varies with adversaries and the systems
under attack. For example, in advanced persistent threat (APT)  attacks~\cite{tankard2011advanced,
virvilis2013big}, one may plan to interfere with the operations of critical
tasks, eavesdrop upon certain information via shared resources
or even carry out debilitating attacks at a critical juncture when the victim system is most vulnerable.
Oftentimes, such attacks require the attacker to precisely gauge the timing
properties of victim tasks. 
In this paper, we introduce attack algorithms that help an attacker obtain this valuable information during the reconnaissance stage.
In this context, the main goal of the attacker is to {\em precisely infer 
when the victim task is scheduled to run}
in the near future (\textit{i.e., the future arrival times}). 

Note that our focus in this paper is on how to 
reconstruct 
the timing behavior of a higher-priority periodic victim task using the scheduler side-channel without violating the real-time constraints. 
We do this from the vantage point of a compromised, lower-priority (``observer'') task.
We do not focus on {\em how} attackers get access to the observer task. They could use any number of known methods -- from compromised insiders, to supply chain vulnerabilities in a multi-vendor development
model (as is usually practiced for the design and development of large, complex systems such as 
aircraft, automobiles, industrial control systems, \etc)~\cite{embeddedsecurity:mohan2015}, to vulnerabilities in the software and network among others. 
Recent work has demonstrated that real-time systems like commercial drones contain design flaws and hence are vulnerable to compromise~\cite{pleban2014hacking,samland2012ar}. 
%
The details of gaining access to an observer task are out of scope for this paper. Nevertheless, it is important to note that we {\em do not require the observer task to be a privileged task in the system}.
A summary of assumptions, attacker's capabilities and goals is given in Table~\ref{tab:assumptions}.

\subsection{Observer Task}
\label{sec::observer_task}
As previously mentioned, we refer to the lower-priority task  that the attacker controls as an ``{\em observer
task}'' and it is denoted by $\tau_o \in \Gamma$. 
It can be a user-space task.
The only constraint we place on $\tau_o$ is that it has a lower priority than the victim task, $pri_v > pri_o$. 
The observer task can be either a periodic or a sporadic task and its period (or its minimum inter-arrival time) can be shorter or longer than the victim task.
In particular, being a periodic task is a more restrictive condition since it reduces the flexibility available to an attacker (this will be clearer as we introduce the algorithms).
That is, the case where a periodic observer task with a period $p_o$ and priority $pri_o$ can succeed, a sporadic observer task (by picking the same $p_o$ as the minimum inter-arrival time and the same priority $pri_o$) can also succeed.
Therefore, when analyzing the attack capabilities in Section~\ref{sec::analysis}, we will consider a periodic observer task (or a sporadic observer task running at a constant inter-arrival time). 

In this paper, {\em we use the observer task to infer the initial offset $a_v$ that can
be used to predict future arrivals of the victim task}. 
We let the observer task ``monitor'' its own execution  intervals by using a system timer.
Note that reading system time does not require privilege in most operating systems ({\eg} invoking $\mathtt{clock\_gettime()}$ in Linux).
The key idea here is that the intervals when the observer task is active 
{\em cannot contain the victim task's} execution or its arrival time point since the victim would have preempted the observer task. 
However, there are also other higher-priority tasks that can impact the observer task's execution behaviors.
To the attacker, the challenge is to then filter out unnecessary information 
and extract the correct information about the victim task.
This is explained in the following section.

\begin{figure*}
\centering
\begin{minipage}{.36\textwidth}
  \centering
  \begin{subfigure}[t]{0.99\textwidth}
        \centering
        \includegraphics[width=0.99\columnwidth]{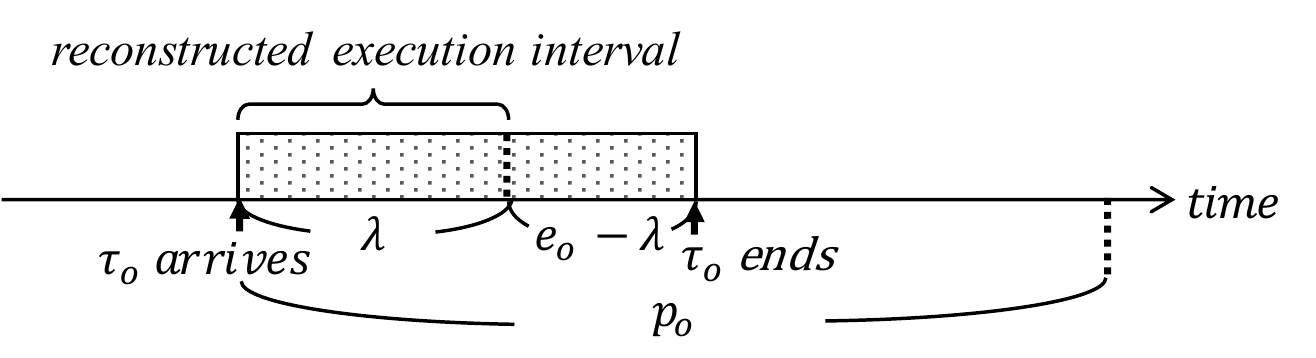}
        \vspace{-1.5\baselineskip}
        \caption{No preemption has occurred.}
    \end{subfigure}%
    
    \begin{subfigure}[t]{0.99\textwidth}
        \centering
        \includegraphics[width=0.99\columnwidth]{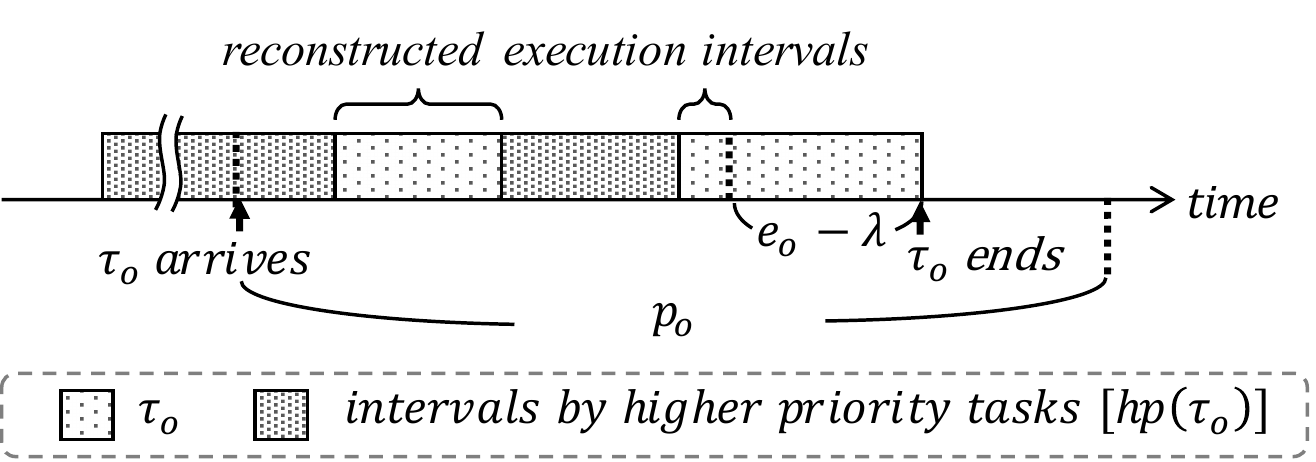}
        \vspace{-1.3\baselineskip}
        \caption{Some tasks $\tau_i \in hp(\tau_o)$ preempt $\tau_o$.}
    \end{subfigure}
    \vspace{-0.3\baselineskip}
    \caption{Examples of reconstructed execution intervals of the observer task. The total length of the	reconstructed execution interval(s) is $\BUDGET$ that leaves $e_o - \BUDGET$ for $\tau_o$ to perform original task functions.}
    \label{fig:construct_ei}
\end{minipage}%
\hfill
\begin{minipage}{.3\textwidth}
  \centering
  \includegraphics[width=0.99\columnwidth]{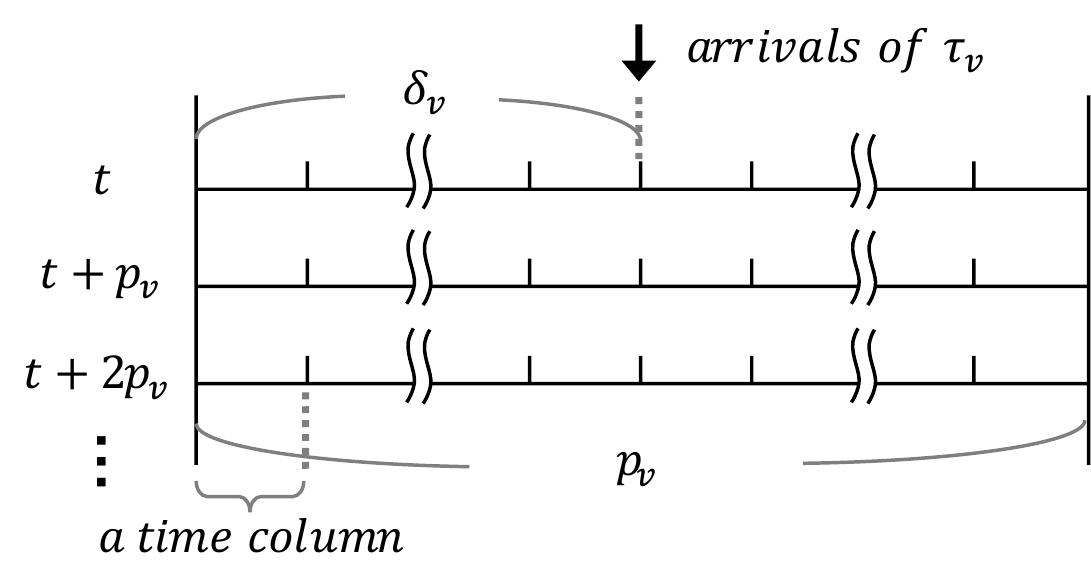}
\vspace{-1.3\baselineskip}
\caption{The skeleton of a schedule ladder diagram. 
    The start time $t$ of the diagram 
    (\ie the beginning of the top timeline) 
    is an arbitrary point in time, assigned by the attacker. 
    The width of each timeline matches the victim task's period $p_v$. 
    The relative offset between the start time $t$ and the true arrival column is defined by $\ARRVCOL$.}
\label{fig:ladder_example}
\end{minipage}
\hfill
\begin{minipage}{.3\textwidth}
  \centering
  \includegraphics[width=0.99\columnwidth]{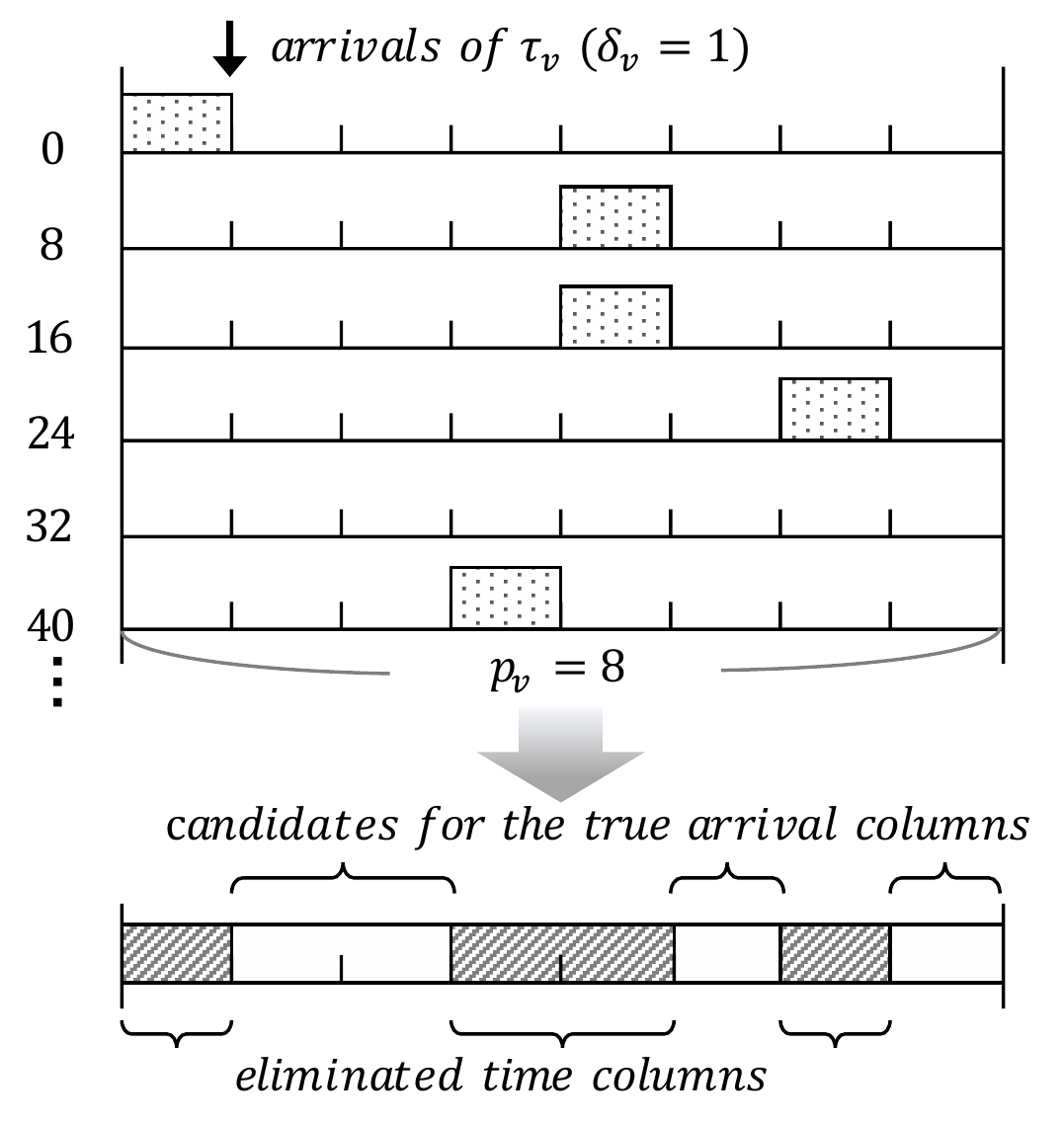}
\vspace{-1.3\baselineskip}
\caption{The processed schedule ladder diagram for Example~\ref{example:analyze_ei}.}
    \label{fig:ei_and_union_example}
\end{minipage}
\vspace{-1\baselineskip}
\end{figure*}

\section{\ATTCKNF}
\label{sec::approach}

\subsection{Overview}
\label{subsec::overview}

We now introduce the core algorithms. 
The main idea is that the victim task cannot run while the observer task is running since the latter has a lower priority. By reconstructing the observer task's own execution intervals and analyzing those intervals based on the victim task's period, we may infer the \textit{initial offset} and \textit{future arrival times} for the victim task.
A high-level overview of the various analyses stages in our proposed \ATTCKNF algorithms includes: 

\begin{tight_enumerate}
\item {\it Reconstruct execution intervals of the observer task}: 
first, the observer task uses a system timer to measure and reconstruct
{\em its own} execution intervals (\ie times when it itself is active). [Section~\ref{sec::reconstruct_ei}]

\item {\it Analyze the execution intervals}: 
The reconstructed execution intervals are organized in a ``{\em schedule ladder diagram}'' -- a timeline that is divided into windows that match the period of the victim task. [Section~\ref{sec::analyze_ei}]

\item {\it Infer the victim task's initial offset and future arrivals}: 
in the final step, 
the initial offset for the victim task is inferred.
This information is then used to predict the future arrivals of the victim. 
Since the victim task is periodic in nature, the offset from the start of its own window translates to the offset from startup when the first instance of the victim task executed.
[Section~\ref{sec::infer_initial_offset}]
\end{tight_enumerate}

\subsection{Reconstruction of Execution Intervals}
\label{sec::reconstruct_ei}

The first step is to reconstruct the observer task's
execution intervals.
We implement a function in the observer task that keeps track of time read from the system timer.
By examining the polled time stamps, preemptions (if any) can be identified and the execution intervals of the observer task can be reconstructed.
While this function seems straightforward, ensuring that it respects real-time constraints ({\ie} all real-time tasks must meet their deadlines) is critical.
That is, the observer task should not run more than its WCET, $e_o$.
Furthermore, even if the attacker does not exceed the
allocated execution budget for itself, it may want to save some budget for other purposes such as performing the analyses to 
reconstruct the timing information of the victim. 
Hence, we define a parameter,
$\BUDGET$, whose value is set by the attacker, to limit the running time of the aforementioned function for the observer task in each period. This ``maximum reconstruction time'', $\BUDGET$, 
is an integer in the range $0 \leq \BUDGET \leq e_o$. The total length of the reconstructed execution intervals is $\BUDGET$ in each period and this leaves the timespan $e_o-\BUDGET$ for the observer task 
to carry out other computations. 
As a result, the service levels guaranteed  by the original (clean) system is still maintained -- thus reducing the risk of triggering system errors. 
On the flip side, the 
attacker may
not be able to capture all possible execution intervals and this could reduce the fidelity/precision of the final results. 
Section~\ref{sec::sigma} discusses how to compute good values for $\BUDGET$. Figure~\ref{fig:construct_ei} shows examples of reconstructed execution intervals.
The function for reconstructing an execution interval of the observer task while taking {$\BUDGET$} into account is detailed in Appendix{\ref{sec::reconstruct_an_execution_interval_algo}} as Algorithm {\ref{algo:construct_ei}}.

\subsection{Analysis of Execution Intervals}
\label{sec::analyze_ei}

Once the observer task's execution intervals are reconstructed, we 
analyze the data to extract information about the victim task.
We organize the observer's execution intervals into a timeline split
into lengths of the victim task's period $p_v$
(recall that $p_v$ is one of the known quantities for the attacker).
The purpose of this step is to place
the execution intervals of the observer task within periodic windows of the victim task.  
The timeline split into windows of length that matches the victim task's period allows the
attacker to see how the observer task's execution intervals are influenced by
the victim task as well as other higher-priority tasks.

To better illustrate the idea of the timeline and the proposed algorithms, we
will use a ``\emph{schedule ladder diagram}'' (defined below) to represent the
construction of the timeline in this paper. The rows in the schedule ladder diagram can
be merged into a single-line timeline (and is just an analytical ``trick'').
A schedule ladder diagram is a skeleton consisting of a set of
adjacent timelines of equal lengths -- that match the victim task's period
$p_v$. 
The start time of the top section can be an 
arbitrary point in time assigned by the attacker (\eg the time instant when the 
algorithms are first invoked).
The columns in the schedule ladder diagram are ``unit time columns''. So, there are $p_v$ time columns. 
That is, the schedule ladder diagram has the same time resolution as the reconstructed execution intervals.
The skeleton of a schedule ladder diagram is illustrated in Figure~\ref{fig:ladder_example}.
%
%
%
From the diagram, plotted based on $p_v$, we make the
following observation:
\begin{observation}
\label{observation:periodic_arrival}
Any \emph{schedule ladder diagram} of $\tau_v$ must contain
exactly one arrival instance of $\tau_v$ in every row.
All arrivals of $\tau_v$ are located in the same time column.
\end{observation}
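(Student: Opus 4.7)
The plan is to derive both claims directly from the negligible-jitter periodicity of $\tau_v$ established in Section~\ref{sec::rts_model}, which places the victim's arrivals at the time points $a_v + j p_v$ for $j = 0, 1, 2, \ldots$. Let $t$ denote the arbitrary start instant chosen by the attacker for the top row of the ladder, so that row $k$ corresponds to the half-open interval $[t + kp_v,\ t + (k+1)p_v)$ of length exactly $p_v$, and the ``column'' of a time point $x$ lying in row $k$ is $x - (t + kp_v) \in \{0, 1, \ldots, p_v - 1\}$.

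For the first part of the observation, I would invoke the elementary fact that any half-open length-$p_v$ interval on the real line contains exactly one element of a $p_v$-spaced arithmetic progression, since the common difference of the progression equals the row width. A one-line floor/ceiling computation pins down the unique index $j(k)$ of the arrival that falls inside row $k$, giving both existence and uniqueness at once.

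For the second part, I would compute the column occupied by the arrival $a_v + j(k) p_v$ within row $k$ as
\[
\bigl(a_v + j(k) p_v\bigr) - \bigl(t + k p_v\bigr) \;=\; (a_v - t) + \bigl(j(k) - k\bigr) p_v,
\]
and then observe that reducing modulo $p_v$ annihilates the second summand, leaving the $k$-independent residue $(a_v - t) \bmod p_v$. Since the column index is already forced into $[0, p_v)$, this residue \emph{is} the column, and it matches the quantity $\delta_v$ defined in Figure~\ref{fig:ladder_example}. Hence every row places its arrival at the same horizontal position.

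There is no real technical obstacle; the argument reduces to an interval-counting fact plus a single congruence. The only nuance worth flagging is the boundary convention: using the half-open interval $[a, b)$ fixed in Section~\ref{subsec::time_model} ensures that an arrival landing exactly on a row boundary $t + kp_v$ is assigned unambiguously to row $k$ (never double-counted), which is precisely what makes the ``exactly one'' count tight.
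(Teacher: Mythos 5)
Your proposal is correct and follows the same reasoning the paper gives (the paper justifies the observation in one sentence by noting that $\tau_v$ arrives every $p_v$ time units and each row has width exactly $p_v$); you simply make the interval-counting step and the modular computation of the common column $(a_v - t) \bmod p_v = \ARRVCOL$ explicit. The half-open boundary convention you flag is a worthwhile clarification but does not change the substance of the argument.
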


This observation is true because $\tau_v$ is a periodic task that arrives every $p_v$ time units and the schedule ladder diagram is plotted with
its interval equal to $p_v$. We define the column where the arrivals of
the victim task are located as the ``true arrival column'', denoted by $\ARRVCOL$.  
Thus, the correlation between the initial offset $a_v$ and the true arrival column $\ARRVCOL$ can be derived by $(t + \ARRVCOL - a_v) \ mod \ p_v = 0$, where $t$ represents the (arbitrary) start time of the schedule ladder diagram assigned by the attacker.
This is also depicted in Figure~\ref{fig:ladder_example}. 
Based on this observation, we define the following theorem with respect to the observer task's executions on the schedule ladder diagram:


\begin{theorem}
\label{th:blank_columns}
The observer task's execution intervals do not appear at the time columns $[\ARRVCOL, \ARRVCOL+bcet_v)$, where $bcet_v$ is the best case execution time of $\tau_v$.
\end{theorem}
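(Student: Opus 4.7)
My plan is to argue the theorem in two short steps: first, locate where in absolute time the column interval $[\ARRVCOL, \ARRVCOL+bcet_v)$ sits in each row of the ladder; second, show that the observer cannot be scheduled anywhere in that interval because a higher-priority, not-yet-completed instance of $\tau_v$ is pending throughout.

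For the first step, I would invoke Observation~\ref{observation:periodic_arrival}: each row of the schedule ladder diagram contains exactly one arrival of $\tau_v$, all lying in column $\ARRVCOL$. Thus, fixing any row starting at time $t + k p_v$ (with $k$ the row index and $t$ the attacker-chosen start), the column interval $[\ARRVCOL, \ARRVCOL+bcet_v)$ corresponds to the absolute-time interval $[t + kp_v + \ARRVCOL,\ t + kp_v + \ARRVCOL + bcet_v)$, whose left endpoint is exactly the arrival time of the $k$-th observed instance of $\tau_v$.

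The core step is then a pending-job argument. Starting at its arrival time, the current instance of $\tau_v$ is released and remains in the ready queue until it completes. Because the taskset is schedulable and deadlines equal periods, the instance does complete before the next arrival, so exactly one job of $\tau_v$ is pending throughout the interval. To complete, that job must accumulate at least $bcet_v$ units of CPU time, so the interval from arrival to completion has length at least $bcet_v$, which means the interval $[\ARRVCOL, \ARRVCOL+bcet_v)$ (translated to absolute time) is fully contained in the span during which a job of $\tau_v$ is pending. Since $pri_v > pri_o$ and the scheduler is preemptive and fixed-priority, whenever a job of $\tau_v$ is pending the scheduler will dispatch either $\tau_v$ itself or some task in $hp(\tau_v) \subseteq hp(\tau_o)$, but never $\tau_o$. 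Therefore no execution interval of $\tau_o$ can overlap with column interval $[\ARRVCOL, \ARRVCOL+bcet_v)$ in any row.

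The part that I expect to require the most care is the edge case where $\tau_v$'s arrival is delayed in its actual start (because some higher-priority task is already running when $\tau_v$ arrives): the observer still cannot run in that interval, but the justification must be phrased in terms of \emph{pendency} of $\tau_v$ rather than its actual CPU occupation, and must rely on $hp(\tau_v) \subseteq hp(\tau_o)$. As long as this is explicitly stated, the remaining argument is direct, and the bound $bcet_v$ cannot in general be improved because it is the tightest lower bound on the CPU demand of $\tau_v$ that is guaranteed to be contributed between arrival and completion.
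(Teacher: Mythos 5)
Your proof is correct and follows essentially the same route as the paper's: the paper splits into the case where $\tau_v$ itself runs (preempting lower-priority tasks for at least $bcet_v$) and the case where $\tau_v$ is preempted by $hp(\tau_v)$ (in which case $\tau_o$ is also preempted), and your pendency invariant simply unifies those two cases into one statement. If anything, your version is slightly tighter, since phrasing the argument in terms of a pending job of $\tau_v$ (which cannot complete before accumulating $bcet_v$ units of CPU) avoids the paper's implicit assumption in the first case that the victim runs continuously.
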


\begin{proof}
From Observation~\ref{observation:periodic_arrival}, the victim task $\tau_v$ arrives regularly at time column $\ARRVCOL$.
If there exists lower priority tasks $lp(\tau_v)$ in execution at $\ARRVCOL$ column, the victim task preempts such tasks until it finishes its job with length of $bect_v$ at a minimum. In the case that there exists higher priority tasks $hp(\tau_v)$ that are executing or arriving during $[\ARRVCOL, \ARRVCOL+bcet_v)$, the victim task $\tau_v$ is preempted. Under this circumstance, if the observer task $\tau_o$ had arrived during $[\ARRVCOL, \ARRVCOL+bcet_v)$, as a lower priority task, it is also preempted.
	Therefore, the time columns $[\ARRVCOL, \ARRVCOL+bcet_v)$ {\em cannot} contain the
	execution intervals of the observer task.
\end{proof}
\vspace{-0.3\baselineskip}

In other words, the time columns where the observer task $\tau_o$ can ever appear are not the true arrival column $\ARRVCOL$. 
To this end, it's easier to think of the problem as the process of eliminating those such time columns. 
If we place the obtained execution intervals of $\tau_o$ on the schedule ladder diagram and remove the corresponding time columns, then, there must exist at least an interval of continuous time columns, of which the length is equal to or greater than $bcet_v$, that is not removed in the end. 
Those time columns are candidates for the true arrival time of $\tau_v$. 
There may also exist time columns
that are not removed 
due to other higher-priority tasks.
Yet, since other tasks have distinct arrival periods (or random arrivals for sporadic tasks), those time columns tend to be scattered (compared to $[\ARRVCOL, \ARRVCOL+bcet_v)$) and are expected to be eliminated as more execution intervals of the observer task are collected. 
In practice, our results indicate that this process works effectively and is mostly stabilized after an attack duration of $5 \cdot LCM(p_o,p_v)$ (see Section~\ref{sec::exp_attack_duration}).

\begin{example}
\label{example:analyze_ei}
Consider an RTS consisting of four tasks $\Gamma=\{\tau_1, \tau_o,	\tau_v, \tau_4\}$.
For the sake of simplicity, we assume that all tasks are periodic in this example (though our analysis can work with periodic, sporadic and mixed systems as well). 
The task parameters are presented in the table below (on the left).
	Note that $pri_i > pri_j$ means that $\tau_i$ has a higher number 
	than $\tau_j$. Thus, task $\tau_1$ has the lowest priority while 
	task $\tau_4$ has the highest priority and $\tau_v$ has higher priority 
	than $\tau_o$. Let the maximum reconstruction duration $\BUDGET$ be $1$ 
	and the start time of the attack be $0$ (as a result, $a_v$ equals $\ARRVCOL$ in this example).  Assuming the attacker has 
	executed the first step/algorithm for some duration, the table 
	below lists the reconstructed execution intervals 
	of the observer task.

\begin{center}\footnotesize
\vspace{-0.5\baselineskip}
\begin{tabular}{|c||c|c|c|c|}
\hline 
 & $p_i$ & $e_i$ & $a_i$ & $pri_i$\\ 
\hline \hline 
$\tau_1$ & 15 & 1 & 3 & 1 \\ \hline 
$\tau_o$ & 10 & 2 & 0 & 2 \\ \hline 
$\tau_v$ & 8 & 2 & 1 & 3\\ \hline
$\tau_4$ & 6 & 1 & 4 & 4 \\ \hline 
\end{tabular} 
\quad
\begin{tabular}{|c|}
\hline 
Reconstructed \\ 
Execution Intervals \\ 
\hline \hline 
[0,1) \\ 
\hline 
[12,13) \\ 
\hline 
[20,21) \\ 
\hline 
[30,31) \\ 
\hline 
[43,44) \\ 
\hline 
\end{tabular} 
\end{center}

Note that since $\tau_1$ has priority lower than the observer task $\tau_o$, it
does not influence the execution of $\tau_o$. Then, we place
the reconstructed execution intervals in a schedule ladder diagram of
width equal to the victim task's period $p_v$. This operation is shown in Figure~\ref{fig:ei_and_union_example}. 
To better understand the effectiveness of the schedule ladder diagram in profiling the victim task's behavior, we plot the original, complete, schedule  on the ladder diagram in Figure~\ref{fig:schedule_on_ladder_example} in Appendix so that readers get a better sense of it. 
This gives us an insight into the relation between the execution intervals of $\tau_o$ and that of the victim task.

From the schedule ladder diagram in Figure~\ref{fig:ei_and_union_example}, 
we remove the time columns that are occupied by the observed execution intervals.
The results are shown at the bottom of Figure \ref{fig:ei_and_union_example}. 
What's left are candidate time columns that contain the true arrival times for the victim that we want to extract. These intervals are passed to the final step to infer the initial offset/arrival times
of the victim task.\qedsymbol

\end{example}
\vspace{-0.5\baselineskip}

\subsection{Inference of Initial Offset and Future Arrival Instants}
\label{sec::infer_initial_offset}

We now get to the final step -- inferring the future arrival instants of
the victim task -- our original objective. But, first, we need to calculate
the initial offset of the victim task. 
%
What we get from the previous step 
is a set of intervals of candidate time columns that
contains the true arrival column of the victim task. The number of intervals depends on the
number of collected execution intervals as well as the ``noise'' introduced by 
other, higher-priority, tasks (hence, there is no guarantee that all false time columns 
can be eliminated in the end). 
However, as observed from our experiments and based on Theorem~\ref{th:blank_columns}, 
the false time columns tend to
be scattered.
Therefore, we take the largest interval as 
our inference that may contain the true arrival column of the victim task.
We then pick the start of this interval as the inferred true arrival column, denoted by $\hat{\ARRVCOL}$.
While this strategy is not always guaranteed to
succeed, our evaluation (both case studies in Section~\ref{sec:casestudies} and performance evaluation in Section~\ref{sec::eval}) shows that we are able to 
achieve a high degree of precision for the inference.
The required initial offset, denoted by $\hat{a_v}$, can then be derived as $\hat{a_v}=(t + \hat{\ARRVCOL}) \ mod \ p_v$, where $t$ represents the start time of the schedule ladder diagram.

\begin{example}
\label{example:infer}
The intervals obtained from Example~\ref{example:analyze_ei}
	correspond to the time columns $[1,3), [5,6)$ and $[7,8)$. According
	to the algorithm, the largest interval, $[1,3)$, is selected. The
	starting point of such an interval is then taken as the inference of
	the victim task's true arrival column, which becomes $\hat{\ARRVCOL}=1$. In this example,
	the true arrival column is $\ARRVCOL=1$. Therefore, the algorithms
	correctly infer the true arrival column of the victim task and the initial offset can be derived accordingly.\qedsymbol
\end{example}

Now, the future arrivals of the victim task can easily be computed by
$\hat{a_v} + p_v \cdot T$, $T \in \mathbb{N}$, 
where $\hat{a_v}$ is the inferred initial offset of $\tau_v$, $p_v$ is the period of $\tau_v$ and $T$ is the desired arrival number. The result of this calculation is the {\em exact time of the $T^{th}$ arrival of the victim task}.


\begin{figure*}
	\centering
	\includegraphics[width=0.85\textwidth]{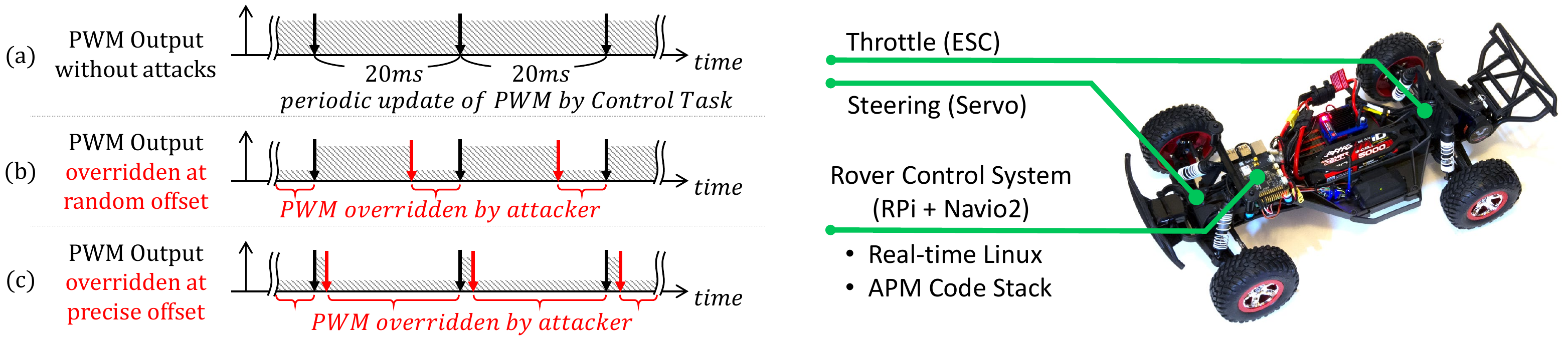}
	\vspace{-0.4\baselineskip}
    \caption{An illustration of PWM channels on a rover system. (a) The PWM outputs are updated periodically by a $50Hz$ task. (b) A naive attack issuing the PWM updates at random instants may not be effective. (c) By carefully issuing the PWM updates right after the original updates, the PWM outputs can be overridden.}
    \label{fig:exp_pwm_override}
    \vspace{-1\baselineskip}
\end{figure*}
\section{Analysis of Algorithms}
\label{sec::analysis}

\subsection{Analyzing Attack Capability}
\label{sec::determine_capability}

We now discuss how to determine the attack capability or effectiveness of the observer task with respect to the victim task.
%
That is, in this context, whether the observer task can remove all false time columns, and hence, correctly infer the arrival information of the victim task.
Note that the analysis presented in this section focuses on the observer task being a periodic task since, as we mentioned in Section~{\ref{sec::observer_task}}, it is a more restrictive condition to an attacker.
Given the same target system, a sporadic observer task may perform better 
as the sporadic task naturally has more flexible arrivals that are constrained only by its minimum inter-arrival time.
%
%


A conservative condition ensuring that all false time columns can be removed from the schedule ladder diagram of {$\tau_v$} is: when the observer task's execution intervals appear in all possible time columns.
Therefore, we first analyze how the observer task's execution relates to the victim task's execution.
When considering both $\tau_v$ and $\tau_o$ as periodic tasks, 
we have the following observation and theorem: 
\begin{observation}
\label{observation:lcm}
In the schedule ladder diagram, the \emph{offset} between the time column of each observer task's arrival (\ie the scheduled execution) and the true arrival column repeats after their {\emph{least common multiple}}, $LCM(p_o, p_v)$.
\qedsymbol
\end{observation}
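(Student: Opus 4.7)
The plan is to formalize the statement as a statement about modular arithmetic and then exploit the defining properties of the least common multiple. First I would fix the notation established in the excerpt: let $t$ denote the (arbitrary) start time of the schedule ladder diagram, and recall that a task arriving at absolute time $x$ is placed in the time column $(x - t) \bmod p_v$. Since $\tau_v$ is periodic with period $p_v$ and initial offset $a_v$, every one of its arrivals lands in the same column $\delta_v \equiv (a_v - t) \bmod p_v$, which is precisely the true arrival column.

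Next I would write down the time column occupied by the $k$-th arrival of the periodic observer task $\tau_o$. Its absolute arrival times are $a_o + k p_o$ for $k \in \mathbb{N}$, so its column in the ladder diagram is $c_k \equiv (a_o + k p_o - t) \bmod p_v$. The offset from the true arrival column is then
\begin{equation*}
\Delta_k \;\equiv\; c_k - \delta_v \;\equiv\; (a_o + k p_o - a_v) \bmod p_v .
\end{equation*}
Thus the offset is determined solely by $k p_o \bmod p_v$ (with a fixed additive constant $a_o - a_v$), so it suffices to understand the periodicity of the sequence $\{k p_o \bmod p_v\}_{k \in \mathbb{N}}$.

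The key step is to invoke the defining property of $L := LCM(p_o, p_v)$: $L$ is by construction a multiple of both $p_o$ and $p_v$. Let $m := L / p_o \in \mathbb{N}$. Then for every $k$,
\begin{equation*}
\Delta_{k+m} \;\equiv\; \bigl(a_o + (k+m) p_o - a_v\bigr) \bmod p_v \;\equiv\; \bigl(a_o + k p_o + L - a_v\bigr) \bmod p_v \;\equiv\; \Delta_k \pmod{p_v},
\end{equation*}
where the last congruence uses $L \equiv 0 \pmod{p_v}$. Hence the sequence of offsets $\{\Delta_k\}$ is periodic with period $m$ arrivals of $\tau_o$, which corresponds exactly to a time span of $m p_o = L = LCM(p_o, p_v)$.

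I expect no serious obstacle here; the claim is essentially a bookkeeping statement about how two independent periodicities ($p_o$ for observer arrivals, $p_v$ for columns of the ladder) interact. The only subtlety worth making explicit is that $LCM(p_o, p_v)$ is the \emph{smallest} such recurrence time only when $\gcd(p_o, p_v)$ is taken into account; however, the observation as stated only claims that the pattern repeats \emph{after} $LCM(p_o, p_v)$, which follows immediately from the divisibility argument above and does not require minimality. I would close by noting that the same argument works verbatim if $\tau_o$ is sporadic with constant inter-arrival time $p_o$, which is the regime assumed in Section~\ref{sec::observer_task}.
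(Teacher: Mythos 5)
Your proof is correct. The paper actually offers no proof at all for this observation---it is asserted as self-evident with an immediate QED symbol---and your modular-arithmetic argument (observer's $k$-th arrival lands in column $(a_o + kp_o - t) \bmod p_v$, and $p_v \mid LCM(p_o,p_v)$ forces the offset sequence to repeat after $LCM(p_o,p_v)/p_o$ arrivals, i.e., after a time span of $LCM(p_o,p_v)$) is exactly the implicit reasoning the authors rely on, spelled out carefully. Your closing remarks---that only repetition, not minimality, is claimed, and that the argument extends to a sporadic observer with constant inter-arrival time---are accurate and consistent with how the paper uses the observation in Theorem~\ref{th:full_coverage}.
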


\begin{theorem}
\label{th:full_coverage}
If the given observer task $\tau_o$ and the victim task $\tau_v$ satisfy the inequality
$e_o \geq GCD(p_o, p_v)$,
then the scheduled execution of $\tau_o$ is guaranteed to appear in all time columns of the schedule ladder diagram of $\tau_v$.
\end{theorem}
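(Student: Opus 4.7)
Let $g = \gcd(p_o, p_v)$ and view the columns of the ladder as residues in $\{0, 1, \ldots, p_v - 1\}$. The plan is to show that the union of $\tau_o$'s scheduled-execution blocks, projected modulo $p_v$, covers every column whenever $e_o \geq g$. I would organize this in three steps: (a) identify the starting column of each arrival of $\tau_o$; (b) characterize the set of all such starting columns as a shifted cyclic subgroup of $\mathbb{Z}/p_v\mathbb{Z}$; (c) extend each starting column by an $e_o$-long block and take the union.

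For step (a), the $k$-th arrival of $\tau_o$ occurs at absolute time $a_o + k p_o$, landing in column $(a_o + k p_o) \bmod p_v$. By Observation~\ref{observation:lcm}, the offsets repeat after $LCM(p_o, p_v)$, so it suffices to restrict $k$ to $0, 1, \ldots, p_v/g - 1$. For step (b), the set $\{k p_o \bmod p_v : 0 \le k < p_v/g\}$ is the cyclic subgroup of $\mathbb{Z}/p_v\mathbb{Z}$ generated by $p_o$, which by Bezout's identity equals the subgroup of multiples of $g$, namely $\{0, g, 2g, \ldots, p_v - g\}$. Shifting by $a_o \bmod p_v$, the starting columns therefore form the arithmetic progression $\{(a_o + jg) \bmod p_v : 0 \le j < p_v/g\}$, evenly spaced with common difference $g$.

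For step (c), each scheduled execution fills $e_o$ consecutive columns starting at its arrival column. When $e_o \geq g$, the block anchored at $(a_o + jg) \bmod p_v$ reaches (in fact contains) the next starting column $(a_o + (j{+}1)g) \bmod p_v$, so it covers every column strictly between them as well. Taking the union over $j = 0, \ldots, p_v/g - 1$ thus covers all residues modulo $p_v$, which is precisely the claim that the scheduled execution of $\tau_o$ appears in every time column of the schedule ladder diagram of $\tau_v$.

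The only non-routine ingredient is the group-theoretic identification in step (b); the rest is a direct cyclic-covering argument. The subtlety most likely to trip up a reader is the meaning of ``scheduled execution'': I would read this as the nominal block $[a_o + k p_o, a_o + k p_o + e_o)$ rather than the possibly-preempted actual execution, and note explicitly that this distinction only strengthens the conclusion, since preemption can shift parts of $\tau_o$'s execution into further columns within the same period but cannot remove it from the columns already claimed by the nominal block.
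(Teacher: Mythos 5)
Your proof is correct, and it is in fact more complete than the argument the paper gives. The paper's proof is a one-line counting argument: by Observation~\ref{observation:lcm} there are $LCM(p_o,p_v)/p_o = p_v/\gcd(p_o,p_v)$ observer arrivals per repetition window, each contributing $e_o$ columns, and the condition ``total coverage $\geq p_v$'' is written as $\frac{LCM(p_o,p_v)}{p_o}\cdot e_o \geq p_v$, which rearranges to $e_o \geq GCD(p_o,p_v)$. That inequality only bounds the total \emph{measure} of the covered columns (with multiplicity); it does not by itself rule out the blocks piling up on a strict subset of the columns. Your steps (b) and (c) supply exactly the missing structural fact: the arrival columns form a coset of the subgroup generated by $p_o$ in $\mathbb{Z}/p_v\mathbb{Z}$, i.e.\ an arithmetic progression with common difference $g=\gcd(p_o,p_v)$, so blocks of length $e_o\geq g$ anchored at consecutive starting columns chain together and tile the whole circle. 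This turns the paper's heuristic area count into a genuine covering argument, and as a bonus it shows the coverage is exact (no overlap) precisely when $e_o=g$, which is what justifies the paper's later choice $\BUDGET=GCD(p_o,p_v)$. One small caveat on your closing remark: under preemption the observer's \emph{actual} execution is pushed later and need not occupy the columns of the nominal block $[a_o+kp_o,\,a_o+kp_o+e_o)$, so the distinction does not ``only strengthen'' the conclusion for actual executions; but since the theorem (and your proof) concerns the nominal scheduled blocks, this aside does not affect the validity of the argument.
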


\label{pf:full_coverage}
\begin{proof}
From Observation~\ref{observation:lcm}, 
the time column offset of the observer task's execution repeats every $LCM(p_o, p_v)$.
Therefore, the aforementioned condition (\ie the scheduled execution of $\tau_o$ appears in all possible time columns) can be described by the 
inequality
$\frac{LCM(p_o,p_v)}{p_o} \cdot e_o \geq p_v$.
%
Then, by using $LCM(p_o,p_v)=\frac{p_o p_v}{GCD(p_o,p_v)}$, we can derive a condition for $e_o$ that guarantees that the observer task can detect the arrivals of the victim task to be $e_o \geq GCD(p_o, p_v)$.
\end{proof}

From Theorem~\ref{th:full_coverage}, we find that the observer task's scheduled execution can appear in some of the time columns more than once during $LCM(p_o,p_v)$ when $e_o>GCD(p_o,p_v)$.
The redundant coverage means that the false time columns will be visited by $\tau_o$ more frequently when compared to the lower ratio of $e_o$ to $GCD(p_o,p_v)$.
In contrast, if $e_o<GCD(p_o,p_v)$, then not all the false time columns can be covered and examined by the observer task.
To better profile the observer task's coverage, we further define a \emph{coverage ratio} that depicts the observer task's capability against the victim task
as follows
\begin{definition}
\label{def:coverage_ratio}
(Coverage Ratio)
The \emph{coverage ratio}, denoted by $\mathbb{C}(\tau_o, \tau_v)$, is computed by 
\vspace{-0.5\baselineskip}
\begin{equation}
\label{equation:coverage_ratio}
\mathbb{C}(\tau_o, \tau_v) = \frac{e_o}{GCD(p_o,p_v)}
\end{equation}
\end{definition}

The \emph{coverage ratio} can be loosely interpreted as the proportion of the time columns 
where the observer task can potentially appear
in the \emph{schedule ladder diagram}.
If all $p_v$ time columns 
can be covered by the observer task,
then $\mathbb{C}(\tau_o,\tau_v) \geq 1$.
Otherwise $0 \leq \mathbb{C}(\tau_o,\tau_v) < 1$.

\subsection{Choosing The Maximum Reconstruction Duration $\BUDGET$}
\label{sec::sigma}
Recall that, the maximum reconstruction duration $\BUDGET$ is used to limit the amount of execution time (in a period) taken up by the observer task for running the attack algorithms.
As the attacker wants to stay stealthy and minimize disruption to the original functionality, it is desirable to use a $\BUDGET$ value as small as possible. 
The remaining execution time $e_o-\BUDGET$ can then be used by the attacker to deliver the original functionality of $\tau_o$ while making progress on the capturing of execution data.
Based on this idea, $\BUDGET$ can be determined by:
\begin{equation}
\BUDGET = 
\begin{cases}
    GCD(p_o,p_v) & \text{if } \mathbb{C}(\tau_o,\tau_v) \geq 1\\
    e_o          & \text{otherwise}
\end{cases}
\end{equation}

In the case of  $\mathbb{C}(\tau_o,\tau_v) \geq 1$, the observer task has redundant coverage. Since a one-time coverage is sufficient for the observer task to examine all $p_v$ time columns, the additional coverage can be traded for other purposes.
Otherwise ($\mathbb{C}(\tau_o,\tau_v) < 1$), the attacker may need to utilize all its computational resource for the attack.

\section{Evaluation Metrics}
\label{sec::eval_metrics}

\noindent
To evaluate 
\ATTCKNF,
we define the following two metrics: 

\noindent
\textbf{(i) Inference Success Rate:}
We define an inference 
to be successful if attacker is
able to {\em exactly} infer the victim task's initial offset (recall from Section
\ref{sec::infer_initial_offset} that once we know the initial offset, we can 
easily predict the future arrival instants). Therefore, the result of
an inference is either true or false. The inference success rate is an {\em average
of the true/false results} for a given test condition for a set of task sets.

\noindent
\textbf{(ii) Inference Precision Ratio:}
In the case that the inference is not exact, we define a metric to evaluate the {\em degree}
of the inference precision (\ie how close we got to the actual values). In this paper, 
the inference target is the initial offset of the victim task. We first compute the distance between the inference
and the true value by
$\epsilon = \left | \hat{a_v} - a_v \right |$
, where $a_v$ is the 
initial offset of the victim task and $\hat{a_v}$ is the inferred initial offset. 
We then define the inference precision ratio:
\begin{definition}
(Inference Precision Ratio)
The inference precision ratio, denoted by $\mathbb{I}_v^o$, is computed by
\begin{equation}
\mathbb{I}_v^o = 
\begin{cases}
    1 - \frac{p_v - \epsilon}{\frac{p_v}{2}} & \text{if } \epsilon > \frac{p_v}{2}\\
    1 - \frac{\epsilon}{\frac{p_v}{2}}       & \text{otherwise}
\end{cases}
\end{equation}
\end{definition}
\vspace{-0.5\baselineskip}

\noindent
The inference precision ratio is a real number within $0 \leq
\mathbb{I}_v^o \leq 1$.  
It allows us to know how close the inference is to the true initial offset.
$\mathbb{I}_v^o = 1$ indicates that the inference of
the initial offset $a_v$ is absolutely correct.



\section{Evaluation Using Case Studies \\on Real Platforms}
\label{sec:casestudies}

\begin{figure*}
  \centering
  \begin{subfigure}{0.36\textwidth}
		\centering
        \includegraphics[width=0.98\columnwidth]{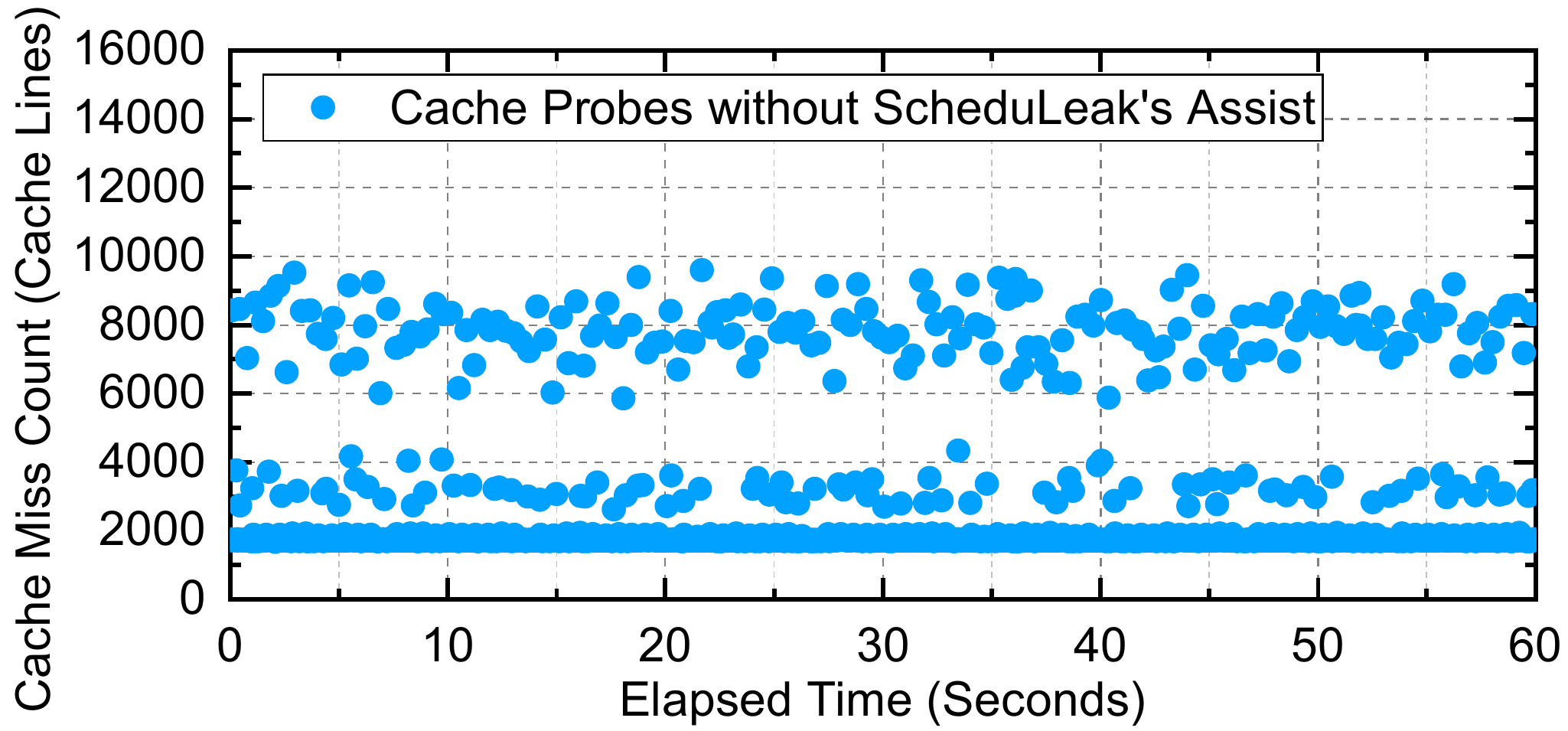}
        \vspace{-0.2\baselineskip}
        \caption{Attack without \ATTCKNF's assist.}
    \end{subfigure}%
    \begin{subfigure}{0.36\textwidth}
        \centering
        \includegraphics[width=0.98\columnwidth]{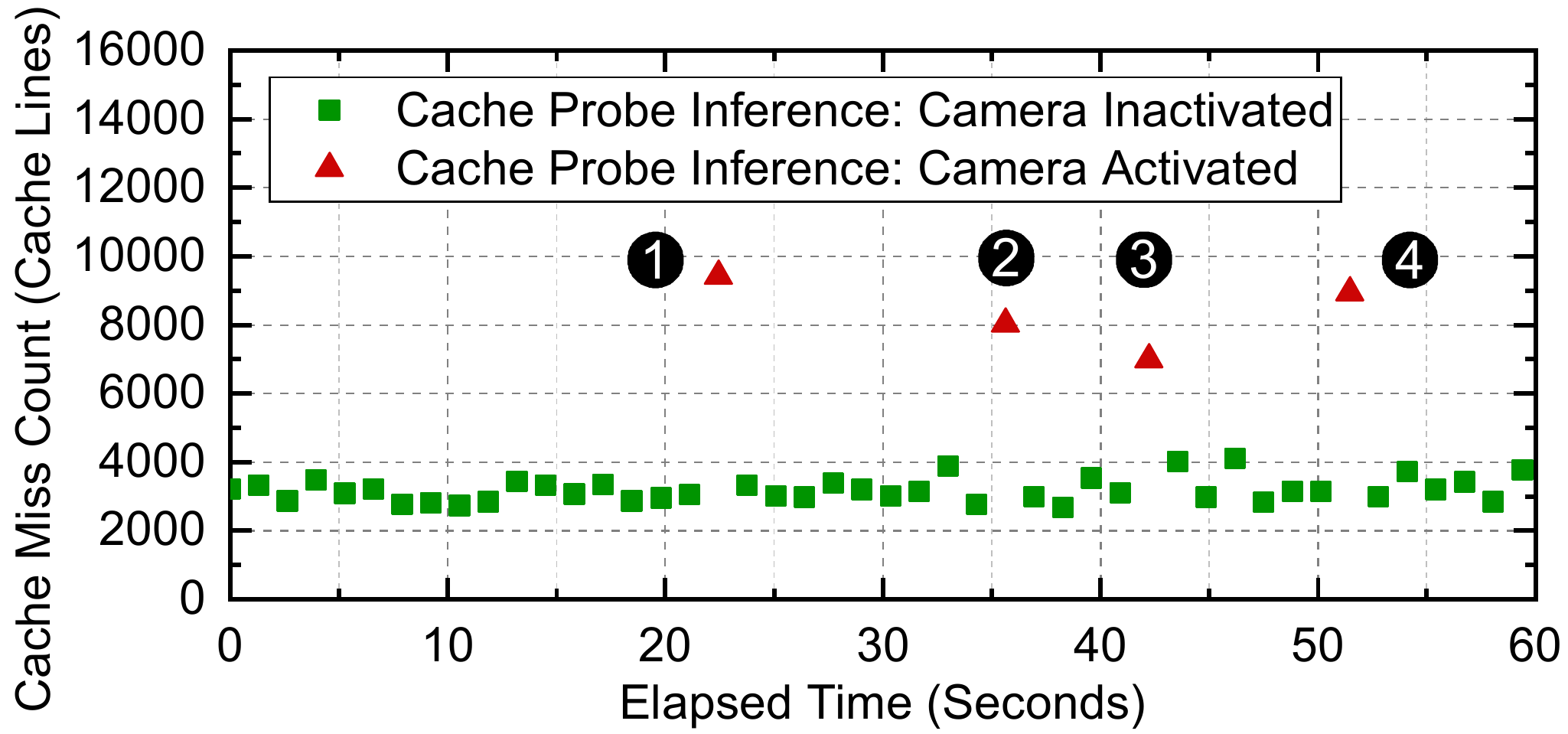}
        \vspace{-0.2\baselineskip}
        \caption{Attack with \ATTCKNF's assist.}
    \end{subfigure}
	\begin{subfigure}{0.27\textwidth}
    	\centering
  		\includegraphics[width=0.98\columnwidth]{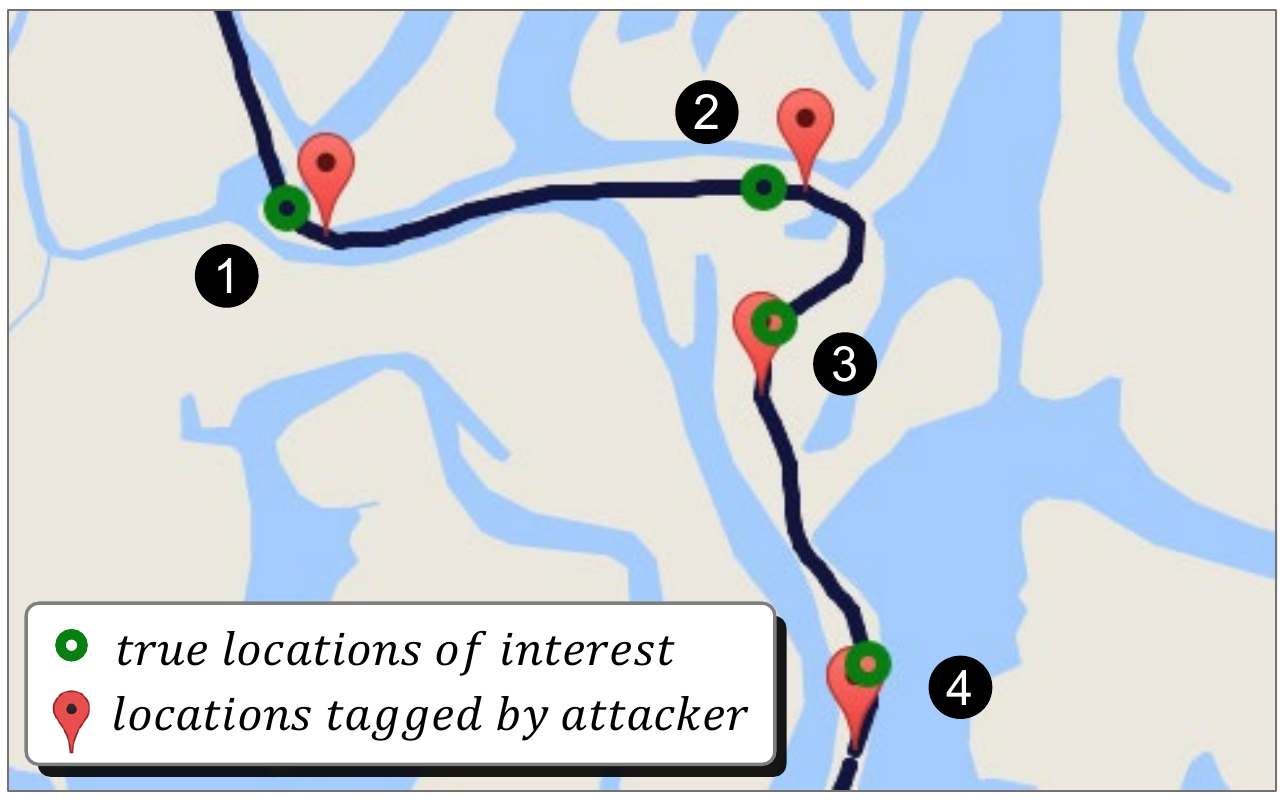}
		\vspace{-1.2\baselineskip}
		\caption{HIL simulator recorded data.}
		\label{fig:cache_attack_gps_trace}
	\end{subfigure}
    \vspace{-0.4\baselineskip}
    \caption{Results of the cache-timing side-channel attacks 		
    in Section~\ref{sec:cache_timing_attack}. 
    (a) demonstrates that a random mechanism launching the attack at arbitrary instants will lead to many indistinguishable cache usage results. (b) shows a successful attack in which four camera activation events (numbered by $1$ to $4$) are identified from the cache probes using precise time information (inferred by \ATTCKNF). (c) visualizes the UAV's trajectory (bold line), true locations-of-interest (green circles) and the attacker's inference (red pins) for the attack (b). The result shows that the attacker's inference matches the ground truth.
}
    \label{fig:cache_attack_probes}
    \vspace{-1\baselineskip}
\end{figure*}

Before evaluating performance of the introduced algorithms, we first aim to evaluate the feasibility of such algorithms on realistic platforms in this section.
The \ATTCKNF algorithms are implemented on two operating systems with a real-time scheduling capability: \ci Real-Time Linux~\cite{RTLinux} and \cii \text{FreeRTOS}~\cite{FreeRTOS}.
%
In what follows, two attack cases are presented. 
They benefit from the information obtained by the proposed algorithms 
and utilize such information to accomplish their primary attack goals.
The demo videos for these attack cases can be found at \url{https://scheduleak.github.io/}.


\subsection{Overriding Control Signals}
\label{sec:take_over_the_control}
\noindent {\bf Attack Scenario and Objective:}
A large number of real-time control systems encapsulate subsystems that control actuators.
For instance, in modern automotive systems, the engine control unit (ECU) controls the valve in the electronic throttle body (ETB) to enable electronic throttle control (ETC).
In most unmanned drones, the flight controller manages the rotary speed of the motors via the electronic speed controller (ESC).
In these systems, the actuation signals such as PWM signals are periodically updated to guarantee a fast and consistent response for the control mission. 

Let's consider an attacker who wants to be able to stealthily override the control in such systems -- for the purpose of 
bad control by causing 
misbehavior or even taking over the control of the system for a short time span.
To do so, the attacker gets into the system as a malicious task and tries to override the control signals.
A brute force strategy of excessively overriding the control signals will not work in this scenario because its high attack overhead can cause other real-time tasks to miss their deadlines and lead to a system crash.
In this case, knowledge of exact timing when the control signals are updated and overriding them at the right instants allow the attacker to effectively take control with a low overhead.

\noindent {\bf Implementation:}
We implement this attack on a custom rover. 
Its control system is built with a Raspberry Pi 3 Model B board. 
A Navio2 module board that encapsulates various inertial sensors is attached to the Raspberry Pi board.
The system runs Real-Time Linux (\ie Raspbian, kernel 4.9.45 with PREEMPT\_RT patch) with Ardupilot~\cite{Ardupilot} autopilot software suite (one of the most popular open-source code stack in the remote and autonomous control communities).
It consists of a set of real-time and non-real-time tasks to perform control-related jobs such as refreshing GPS coordinates, decoding remote control commands, performing PID calculation and updating output signals.
One of the tasks periodically updates the PWM values, with a period of $20ms$, for steering and throttle. The updates are sent over Serial Peripheral Interface (SPI) to the Navio2 module that outputs the PWM signals to a servo and a ESC.
Figure~\ref{fig:exp_pwm_override}(a) shows an illustration of the PWM output channels working under normal circumstances.

In this attack, we assume that the attacker has access to a low-priority, periodic task (as the observer task, $p_o=50ms$) and a non-real-time Linux process (for launching the PWM overriding attack).
The attacker's ultimate objective is to override the control signals updated by the victim task (\textit{i.e.,} the $50Hz$ periodic task).
In this implementation, the observer task uses a system call, $\mathtt{clock\_gettime()}$, to obtain clock counts (in nanoseconds) from $\mathtt{CLOCK\_MONOTONIC}$.
Time measurement is further rounded up to microseconds when running the ScheduLeak algorithms since all task parameters are multiples of $1us$ in Ardupilot.
Once the victim task's initial offset is determined, the attacker engages the non-real-time process to issue the PWM updates over the same interface that the victim task uses. Note that this is possible due to a lack of authentication between the Raspberry Pi board and the Navio2 module by design.
This process keeps track of time by using $\mathtt{clock\_gettime()}$ and issues two PWM updates (one for the steering and one for the throttle) whenever it determines that it has passed a victim task's arrival instant (\textit{i.e.,} $t-\hat{a_v}\ mod \ p_v \geq 0$, where $t$ is the present time and $\hat{a_v}$ is the inferred victim task's initial offset). The process remains idle between two PWM updates to reduce the attack footprint.

\noindent {\bf Attack Results:}
Figures~\ref{fig:exp_pwm_override}(b) and \ref{fig:exp_pwm_override}(c) show that the PWM output may be overridden using a different value to the PWM hardware.
However, without exact schedule information, the attacker can only periodically send the updates with a randomly selected initial offset (Figure~{\ref{fig:exp_pwm_override}}(b)).
The random initial offset can be any point in the $20ms$ period.
From our experiments, only the attack with an initial offset in the range between $a_v$ and $a_v + 8.3ms$ 
can produce an effective override of the steering and throttle controls. As a result, the attacker has a chance of $41.5\%$ to select a valid initial offset and lead to an effective attack.

On the other hand, the attacker, after launching the \ATTCKNF attack and knowing exactly when the victim task arrives,
can carefully issue PWM update {\em right after the original update} to override the PWM output (Figure~\ref{fig:exp_pwm_override}(c)).
In this case, the attacker firstly runs the ScheduLeak algorithms in the observer task, yielding $0.9985$ for the inference precision ratio (for inferring the victim task's initial offset) in a duration of 1 second.
This allows the attacker to launch the PWM overriding attack in the non-real-time process with the precise inference of the victim task's initial offset. Note that an attacker's PWM update attempted at a victim task's arrival instant is executed after the victim task's job is finished (and hence after the original PWM update) since the non-real-time process has a priority lower than the victim task.
Consequently, the attacker can take over control of the steering and throttle.
By probing the PWM signals, we observe that the overridden PWM signals are active $85\%$ of the time.
As a result, we see that the rover no longer responds to the original control. 
Instead, the rover is driven by the attacker's commands.
Since the attacker's task remains idle between two PWM updates, it takes up CPU utilization as small as $2.6\%$.

\subsection{Inferring System Behaviors}
\label{sec:cache_timing_attack}


\noindent {\bf Attack Scenario and Objective:}
Let's consider a UAV system 
executing a surveillance mission. 
It captures high resolution images when flying over locations of high-interest.
In this case, the attacker's goal is to extract the locations targeted by the UAV.
The strategy is to monitor when the surveillance camera on the UAV is switched to a execution mode in which high-resolution images are being processed.
This can be done by exploiting a cache-timing side-channel attack to gauge the coarse-grained memory usage behavior of the task that handles the images.
A high cache usage by this task would indicate that a high-resolution image is being processed; otherwise it would use less cache memory.
However, a random sampling of the cache will result in noisy (and often useless) data since there exist other tasks in the system that also use the cache. 
In contrast, knowing when the task is scheduled to run allows the attacker to execute prime and probe attacks \cite{osvik2006cache, page2002theoretical} very close to the targeted task's execution.
%

\noindent {\bf Implementation:}
This attack is implemented in a hardware-in-the-loop (HIL) simulation with a Zedboard running \text{FreeRTOS} that simulates the control system on a UAV.
The system consists of an image processing task (the victim task, $p_v=33ms$) handling photos at a rate of $30Hz$ 
and four other tasks (unknown to the attacker) -- all running in a periodic fashion.
The victim task 
processes a large size of data when the UAV reaches a location of interest on a preloaded list.
Other tasks consume differing amounts of memory.
In this case, we assume that the attacker enters the system as the lowest-priority periodic task, $p_o=40ms$. 
The attacker uses this task for both running the ScheduLeak algorithms and carrying out the cache-timing side-channel attack.
The attacker's final goal is to observe the victim task's memory usage and learn the system behavior.

\noindent {\bf Attack Results:}
First, we consider an attacker who does not employ a \ATTCKNF attack. 
The attacker launches the cache-timing side-channel attack during every period to try and estimate the cache usage of the victim.
As shown in Figure~\ref{fig:cache_attack_probes}(a), this produces many cache probes and it is hard to distinguish the cache usage of the victim task from other tasks.
This results in an unsuccessful attack since no usage patterns from the victim task can be identified. 

Next, let's consider the case in which the attacker leverages the \ATTCKNF attack.
In this case, the algorithms yield an inference precision ratio of {$0.99$} 
within a window of $3 \cdot LCM(p_o,p_v)$ (\ie 4 seconds).
Then, the attacker is able to launch the cache-timing side-channel attack right before and after the victim is executed and skip those instants that are irrelevant.
Figure~\ref{fig:cache_attack_probes}(b) shows the result of the precise cache probe against the victim task.
We see that the attack greatly reduces the noise caused by other tasks ($96.9\%$ of the cache probes are omitted) and is able to precisely identify the victim task's memory usage behavior.
%
As a result, four camera activation instants can be identified from the spikes (red triangular points) shown in Figure~\ref{fig:cache_attack_probes}(b).
When coupled with the flight route information that the attacker obtains through other measures,
it becomes possible to infer the locations of high-interest, as shown in 
Figure~\ref{fig:cache_attack_probes}(c). 

\section{Performance Evaluation \\and Design Space Exploration}
\label{sec::eval}

\begin{figure*}
\centering
\begin{minipage}{.32\textwidth}
  \centering
  \includegraphics[width=0.975\columnwidth]{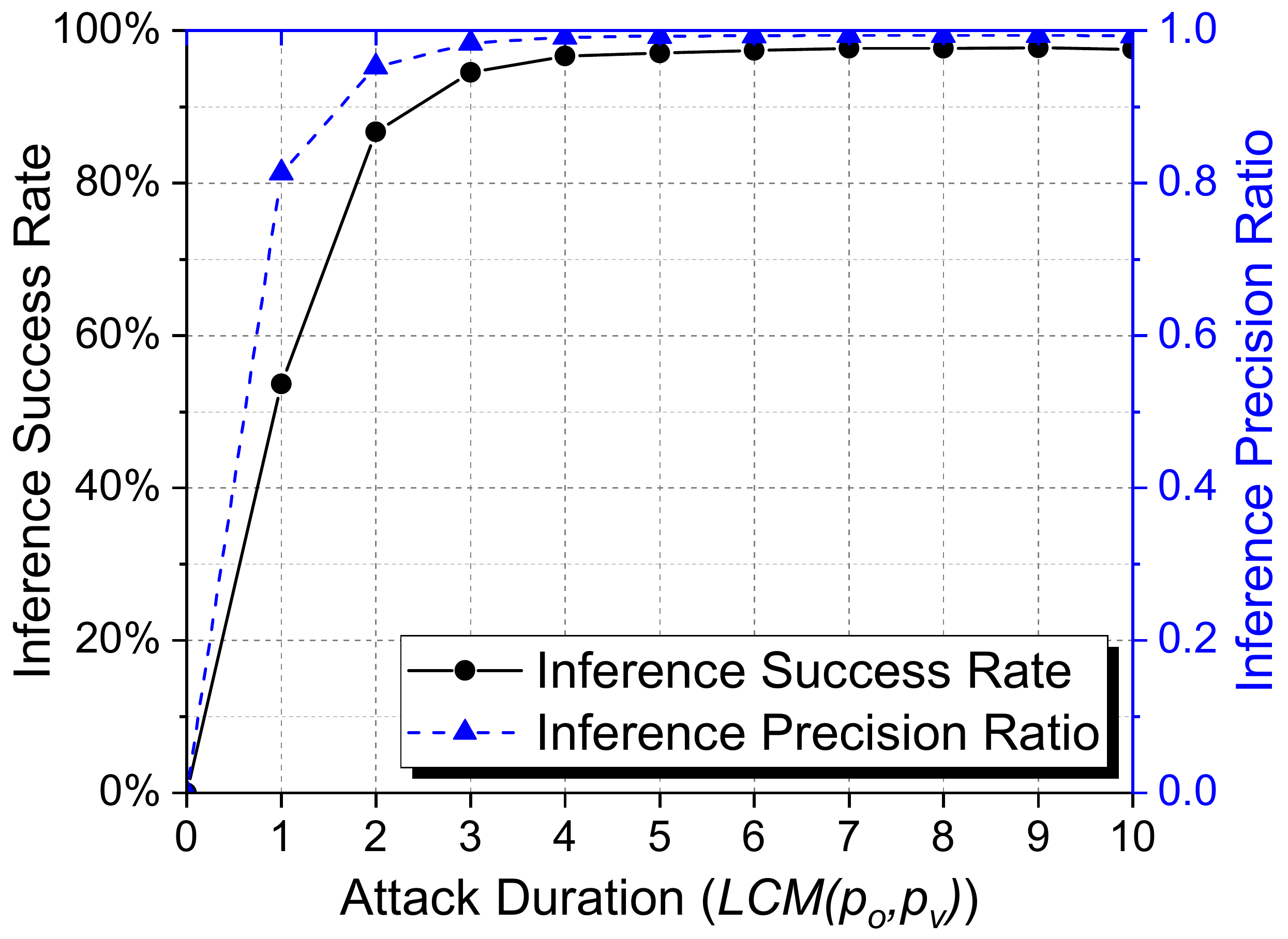}
\vspace{-0.6\baselineskip}
\caption{The results of varying attack duration. 
It indicates that longer attack durations can increase the chance of success and yield better inference precision. The points are connected only as a guide. }
\label{fig:timeVsSuccessRate}
\end{minipage}%
\hfill
\begin{minipage}{.32\textwidth}
  \centering
  \includegraphics[width=0.98\columnwidth]{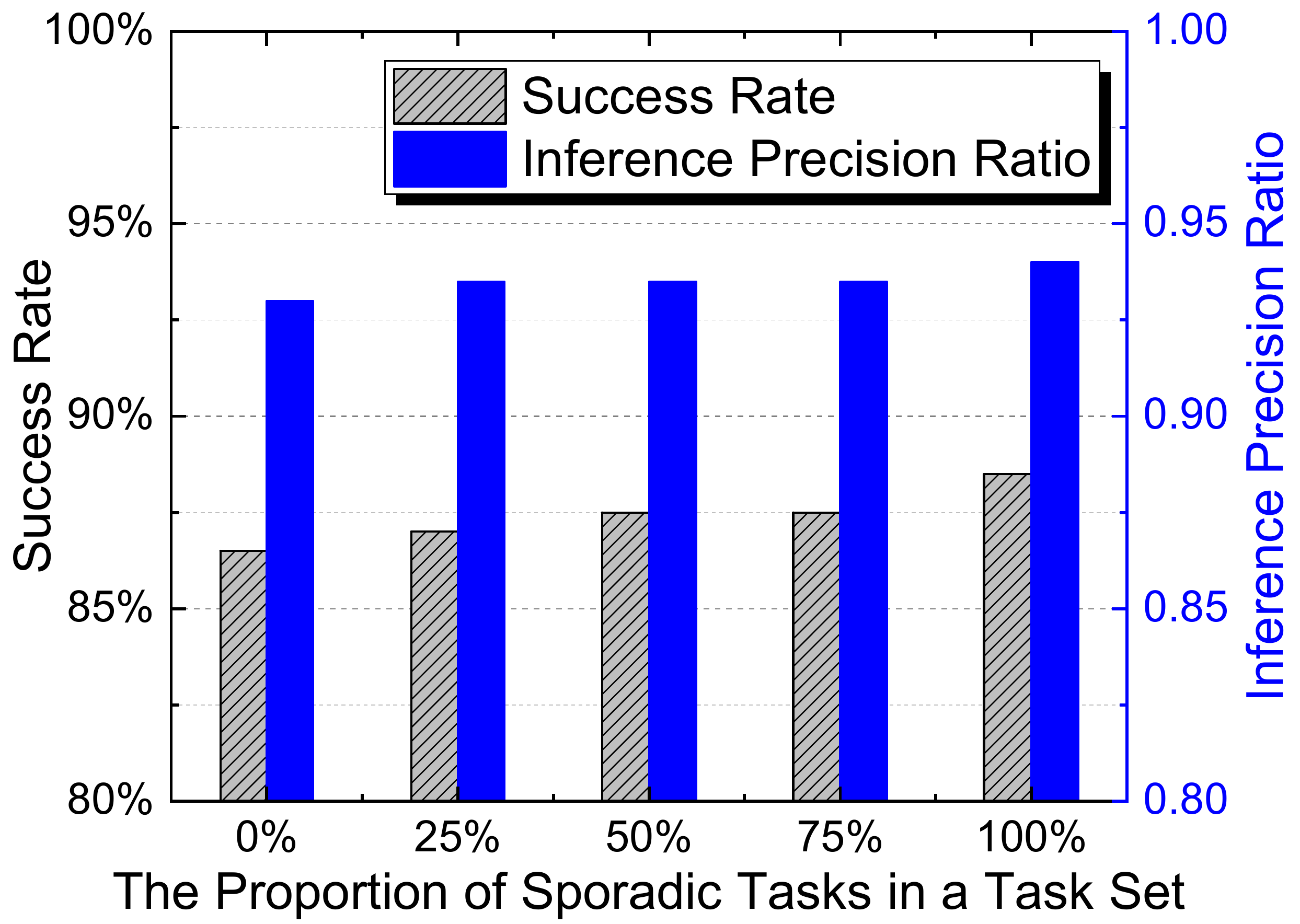}
\vspace{-0.5\baselineskip}
	\caption{The impact of sporadic tasks. 
It indicates that the algorithms perform better with sporadic tasks, with a (slightly) ascending trend as the proportion of sporadic tasks increases.}
\label{fig:sporadicVsPeriodic}
\end{minipage}
\hfill
\begin{minipage}{.32\textwidth}
  \centering
  \includegraphics[width=0.98\columnwidth]{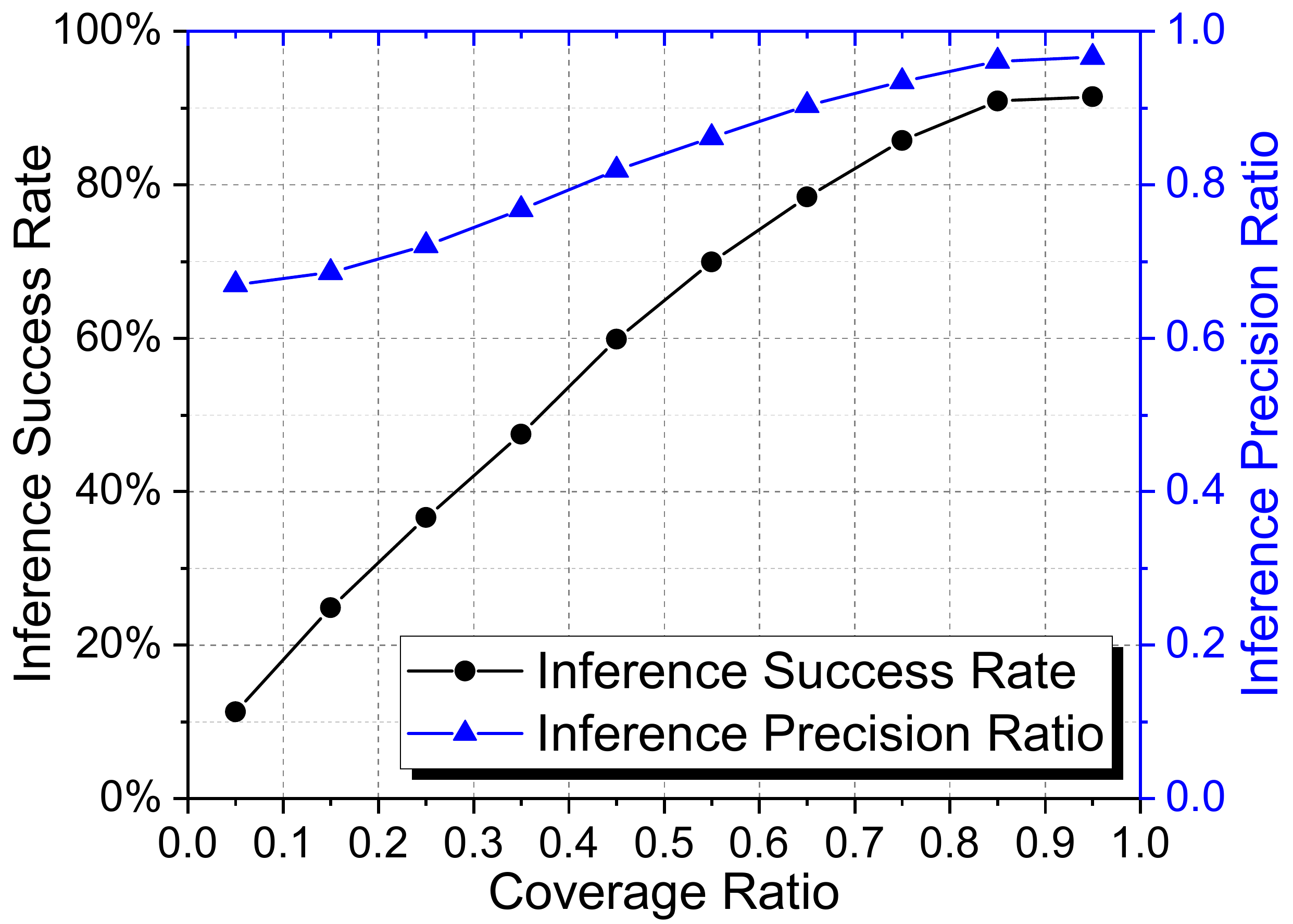}
\vspace{-0.5\baselineskip}
\caption{The performance of the algorithms when $\mathbb{C}(\tau_o,\tau_v) < 1$. 
Round and triangular points represent the inference success rate and the inference precision ratio, respectively.
}
\label{fig:coverageRatio}
\end{minipage}
\end{figure*}

\begin{figure*}
\centering
\begin{minipage}{.3\textwidth}
  \centering
  \includegraphics[width=.9\linewidth]{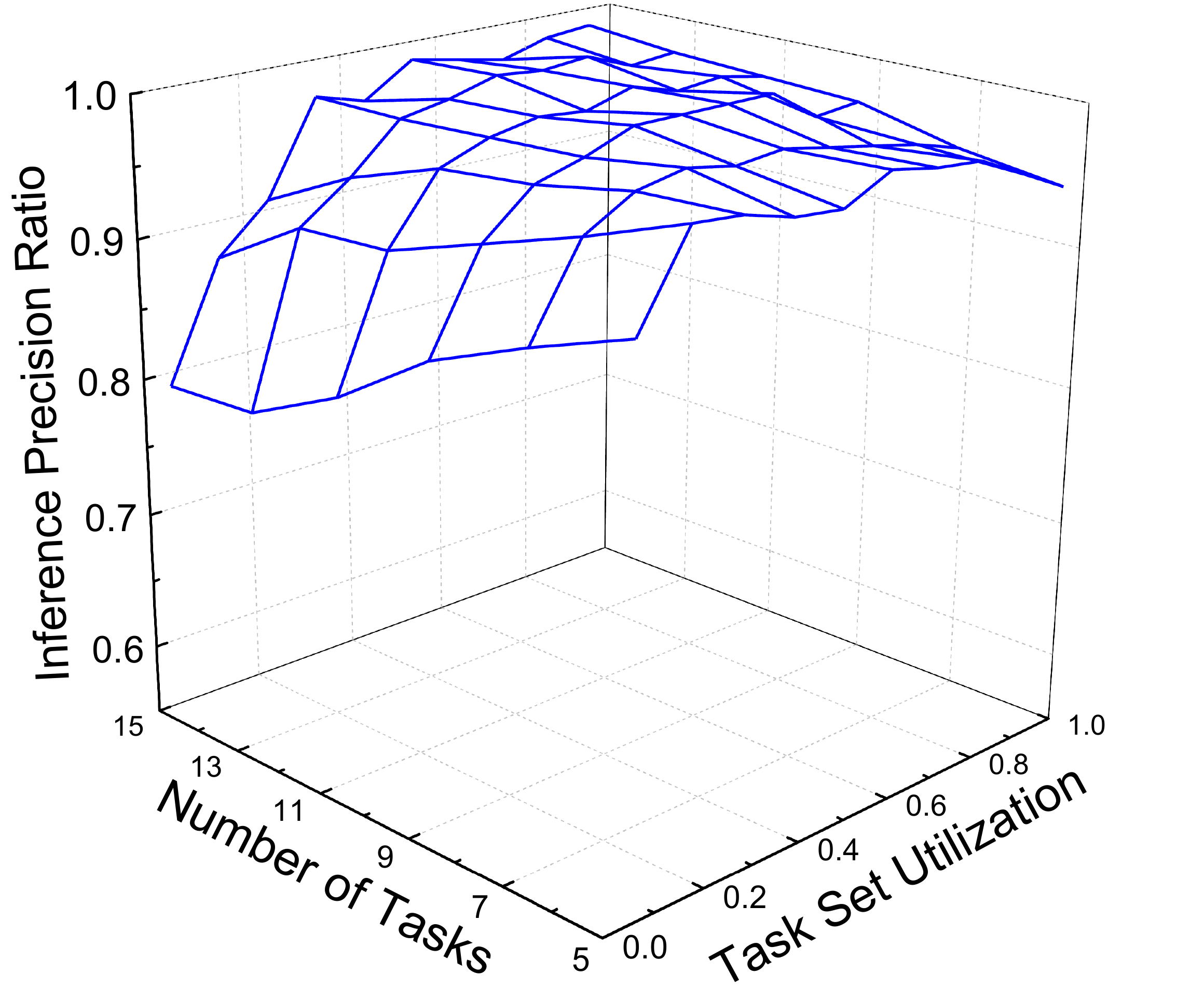}
  \caption{The impact of the number of tasks and the task set utilization. 
It shows that the algorithms perform better with small number of tasks and high task set utilization.}
  \label{fig:numOfTasksVsUtil}
\end{minipage}%
\hfill
\begin{minipage}{.68\textwidth}
  \centering
  \begin{subfigure}[t]{0.49\textwidth}
  \centering
  \includegraphics[width=.9\linewidth]{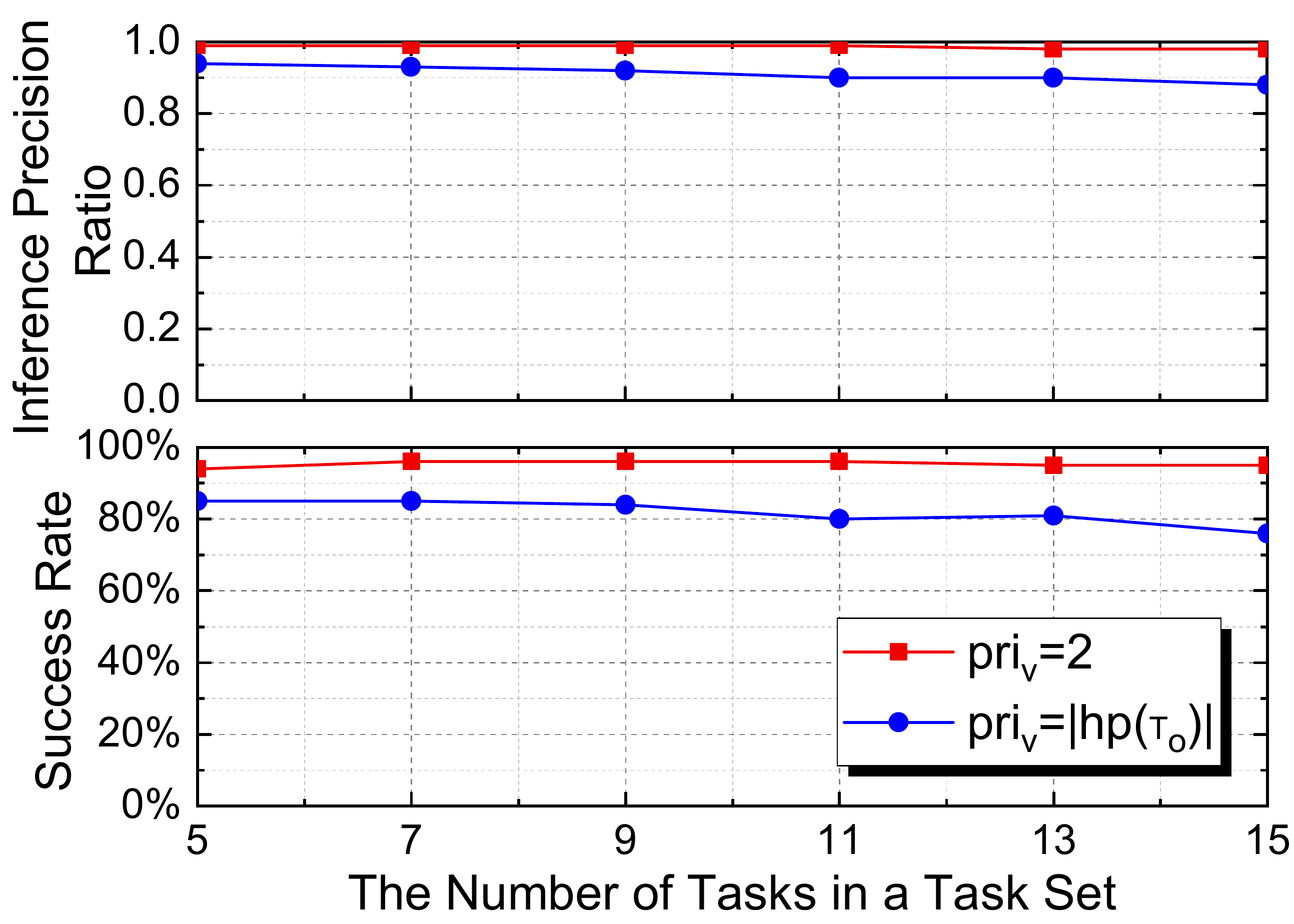}
  \vspace{-0.3\baselineskip}
  \caption{Grouped by the number of tasks.}
  \label{fig:victimTaskPriority_numOfTasks}
  \end{subfigure}
  \hfill
  \begin{subfigure}[t]{0.49\textwidth}
  \centering
  \includegraphics[width=.9\linewidth]{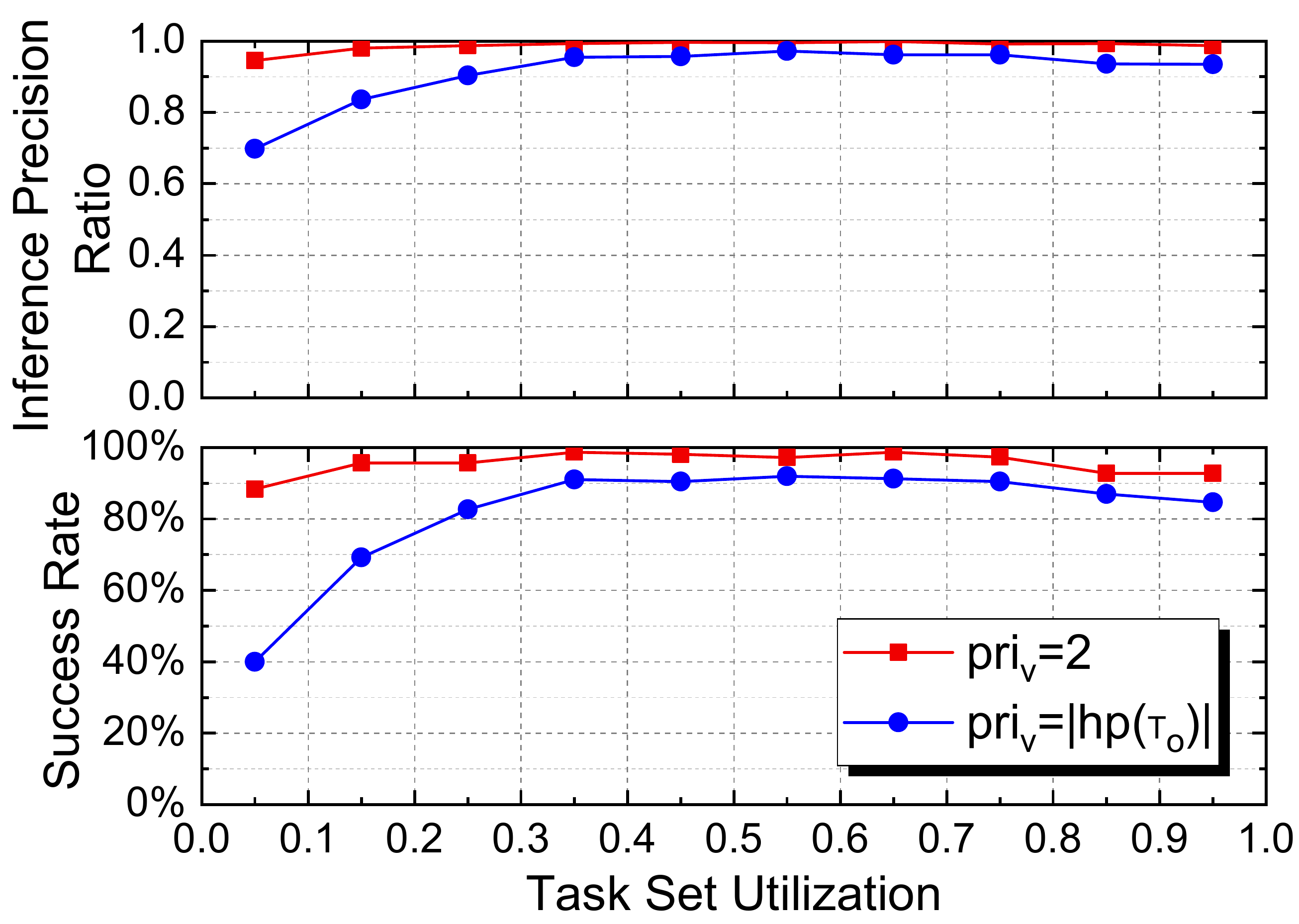}
  \vspace{-0.3\baselineskip}
  \caption{Grouped by task set utilization.}
  \label{fig:victimTaskPriority_util}
  \end{subfigure}
  \vspace{-0.5\baselineskip}
  \caption{The impact of the victim task's position in a task set. It suggests that a victim task with higher priority makes it hard for the algorithms to make a correct inference. This result stands throughout different number of tasks in a task set as well as different task set utilization. Also, a high priority victim task with low task set utilization reduces the inference performance. This explains the huge drop in Figure~{\ref{fig:numOfTasksVsUtil}}.}
  \label{fig:victimTaskPriority}
\end{minipage}
\vspace{-1\baselineskip}
\end{figure*}
\subsection{Evaluation Setup}
\label{sec::eval_setup}

%
We test our algorithms with randomly generated synthetic
task sets. 
The task sets are grouped by CPU utilization from $[0.001+0.1\cdot x,  0.1+0.1\cdot x]$ where $0
\leq x \leq 9$. 
Each utilization group
consists of $6$ subgroups that have a fixed number of tasks ($5, 7, 9, 11,
13, 15$). Each subgroup contains $100$ task sets. 
In each task set, $50\%$ of the tasks are generated as periodic tasks ($3,
4, 5, 6, 7, 8$ periodic tasks for each subgroup respectively) while the
rest of the tasks are generated as sporadic tasks. 
The task periods are randomly drawn from $[100,1000]$ and we assume that the attacker has access to the system time with a resolution of $1$.
The task initial offset 
is randomly selected from $[0, p_i)$.
In the case of sporadic tasks, we take the generated task period as the minimum inter-arrival
time.
The task priorities are assigned using the rate-monotonic algorithm \cite{LiuLayland1973}.
We only pick those task sets that are schedulable.

The observer task and the victim task are assigned when generating the task sets. 
In simulations, we consider a periodic observer task because it represents the worst case attack scenario for the adversary, as discussed in Section~{\ref{sec::determine_capability}}.
Since only the tasks with higher priorities influence the observations, 
we skip the generation of lower-priority tasks $lp(\tau_o)$.
Thus, the observer task always has the lowest
priority (\ie $pri_o=1$) in these generated task sets. For the victim task, two
conditions are considered: \ci $pri_v=2$ and \cii $pri_v=\left | hp(\tau_o)
\right |$. This is to test the two boundary conditions. 
Further,
we set the coverage ratio to be $\mathbb{C}(\tau_o, \tau_v) \geq 1$ when
generating the task sets (except for evaluating the impact of the coverage
ratio), to evaluate whether the algorithms can truly
produce confident inferences while the attacker has theoretical guarantees
of the attack capability (\ie having full coverage of all $p_v$ time columns,
as per Theorem~\ref{th:full_coverage}). 
The maximum construction duration $\BUDGET$ is set as per Section~{\ref{sec::sigma}}. 
Thus, $\BUDGET=GCD(p_o,p_v)$.

%
For varying the execution times of the tasks and adding jitter to the inter-arrival times (for the sporadic tasks), we use the normal and Poisson distributions respectively.
Note that Poisson distribution is used for inter-arrival
time variation because the probability of each occurrence (\ie each arrival of
the sporadic task) is independent in such a distribution model.
First, a schedulable task set is generated (using the aforementioned parameters). 
Then, for a task $\tau_i$, 
the average execution time is computed by $wcet_i\cdot 80\%$. 
Next, we fit a normal distribution $\mathcal{N}(\mu, \sigma^2)$ for the task
$\tau_i$. We let the mean value $\mu$ be $wcet_i\cdot 80\%$ and find the
standard deviation $\sigma$ with which the cumulative probability $P(X \leq
wcet_i)$ is $99.99\%$. As a result, such a normal distribution produces
variation such that $95\%$ of the execution times are within $\pm 10\% \cdot wcet_i$. 
To ensure that the task set remains schedulable, we adjust the maximum modified 
execution time to be equal to WCET if it
exceeds WCET. 
For sporadic tasks, the average inter-arrival time is computed by $p_i \cdot
120\%$. 
We use a Poisson distribution with $p_i \cdot 120\%$ as its mean value to generate the varied inter-arrival times during the
simulation. Similarly, so as to not violate the given minimum inter-arrival time for
a sporadic task, we regenerate the modified inter-arrival time if it drops below
$p_i$. 
%


\subsection{Results}
\label{subsec::results}

\subsubsection{Attack Duration}
\label{sec::exp_attack_duration}
Our first goal is to understand the effects of how long attacks last. 
Recall that the coverage of the schedule ladder diagram repeats every
$LCM(p_o,p_v)$ (Observation~\ref{observation:lcm}). 
Therefore, we use $LCM(p_o,p_v)$ as the unit of time 
to evaluate the algorithms. 
Taking the Ardupilot software 
as an example, the largest $LCM$ of any real-time task (\ie a AP\_HAL thread) pairs is $20ms$. While $LCM(p_o,p_v)$ varies system to system, this gives us an insight into the scale of $LCM(p_o,p_v)$.
In this experiment, we generate task sets as explained in 
Section~\ref{sec::eval_setup} and run the \ATTCKNF algorithms with a fixed
duration of $10 \cdot LCM(p_o,p_v)$ for every task set. Figure~\ref{fig:timeVsSuccessRate} 
shows the results of this experiment.
In Figure~\ref{fig:timeVsSuccessRate}, each point of the inference precision
ratio is the mean of the individual inference precision ratios of $12000$ task
sets for a given attack duration. The results suggest that the longer the
attack is sustained, the higher success rate and precision ratio the algorithms can
achieve. This is because a longer attack time means more execution intervals are
reconstructed by the observer task.
%
On the other hand, both success
rate and precision ratio plateau after $5 \cdot LCM(p_o,p_v)$ with the
success rate and the precision ratio higher than $97\%$ and $0.99$
respectively. This shows that the proposed algorithms can produce inference
with precision in a very short time and the additional gains obtained from running longer are
minuscule. For this reason, \emph{we evaluate
the algorithms with a duration of $10 \cdot LCM(p_o,p_v)$ for the rest of the
experiments below}.

\subsubsection{The Number of Tasks and Task Set Utilization}
\label{sec::exp_the_impact_of_num_tasks_and_util}
Figure~\ref{fig:numOfTasksVsUtil} displays a 3D 
graph that shows the averaged inference precision ratio for each combination of
the number of tasks and the task utilization subgroup. The results suggest
that \ci the inference precision ratio decreases as the number of tasks in a
task set increases and \cii the inference precision ratio increases as the task
set utilization increases. The worst inference precision ratio happens when
there are $15$ tasks in a task set with the utilization group $[0.001, 0.1]$
-- these are boundary conditions for both the number tasks and the utilization in
this experiment. The impact of the number of tasks is straightforward as
having more tasks in $hp(\tau_o)$ means that $\tau_o$ will be
preempted more frequently. This makes it hard for the observer task to eliminate the
false time columns. 
For the impact of the task set utilization,
a low utilization value implies that the execution times of the tasks are small and
there exists a lot of gaps in the schedule. Hence, the observer may get many small
and scattered intervals. Since we let the algorithms pick the largest interval to infer
the true arrival column, multiple small intervals
are problematic -- the algorithm has a hard time picking the right interval that contains
the true arrival. Hence errors are compounded.

\subsubsection{Priority of the Victim Task}
We analyze the impact of the victim task's priority in a task
set.
From Section~\ref{sec::eval_setup}, we consider two
boundary conditions for the victim task's position: \ci $pri_v=2$ and \cii
$pri_v=\left | hp(\tau_o) \right |$.
Figures~\ref{fig:victimTaskPriority}(a) and
\ref{fig:victimTaskPriority}(b) present the experiment results for the
two conditions.  Figure~\ref{fig:victimTaskPriority}(a) shows that the
huge drop in Figure~\ref{fig:numOfTasksVsUtil} 
(as the number of tasks increases)
is mainly caused by the condition $pri_v=\left | hp(\tau_o) \right |$.
Figure~\ref{fig:victimTaskPriority}(b) also shows the similar indication that the drop in low utilization groups in
Figure~\ref{fig:numOfTasksVsUtil} is a result of the condition $pri_v=\left |
hp(\tau_o) \right |$.  It's worth noting that, since we use the rate-monotonic
algorithm to assign the priority, $pri_v=2$ means that $\tau_v$ has a
large period, hence potentially has greater execution time. 
It benefits the algorithms as we pick the largest interval to make an inference in the final step. 
%

\subsubsection{Sporadic and Periodic Tasks}
We examine the impact of the mix of sporadic and periodic tasks.
We generate task sets with $0\%$, $25\%$, $50\%$, $75\%$ and $100\%$ sporadic tasks in a task set.
The rest of the tasks in a task set are periodic tasks.
Comparing the result of all periodic tasks 
and the result of all sporadic tasks shown in Figure~\ref{fig:sporadicVsPeriodic}, 
we find that the {\em algorithms perform better with more sporadic tasks}.
It shows an ascending trend as the proportion of sporadic tasks increases.
However, the change in the performance is less than $1\%$, which is subtle.
Hence, our inference algorithms are fairly agnostic to the actual mix of sporadic/periodic tasks
in the system.

\subsubsection{Coverage Ratio and The Maximum Reconstruction Duration}
\label{sec:::exp_coverage_ratio}
The experiments above show that the algorithms can reach certain inference
success rates and precision when $\mathbb{C}(\tau_o,\tau_v) \geq 1$ and
$\BUDGET=GCD(p_o,p_v)$.  However, attackers may face a victim system where
$\mathbb{C}(\tau_o,\tau_v) < 1$.  That is, the observer task's execution is not
guaranteed to appear in all $p_v$ time columns. To evaluate the performance of the
algorithms against such a case, we generate task sets with $0 <
\mathbb{C}(\tau_o,\tau_v) < 1$ (thus $\BUDGET=e_o$) and run the algorithms for a
duration of $10 \cdot LCM(p_o,p_v)$.  In this experiment, task sets are grouped
by coverage ratio from $[0.001+0.1\cdot x,  0.1+0.1\cdot x]$ where $0 \leq x
\leq 9$.  Figure~\ref{fig:coverageRatio} shows the results.  It suggests that
the attacker may fail to completely infer the victim task's initial offset when the
coverage ratio is low. 
Yet, the algorithms can still succeed in some cases due to the fact that Theorem~{\ref{th:blank_columns}} holds even with a low coverage ratio.
When the observer has about half coverage of the
time columns (the group of  $[0.401,0.5]$), it yields $59.9\%$ in success rate
and $0.819$ for the averaged inference precision ratio. 
As more time columns
are observed by the observer task, the precision and success rate increase.
This is because higher coverage ratios give the algorithms a higher chance
to capture the true arrival column and remove others. As a result, the inference
success rate is about proportional to the coverage ratio.

\section{Discussion -- Potential Defense Strategies}
\label{sec::discussion}

To defend against the proposed attack algorithms, one strategy could be to enforce a low coverage ratio between any low priority task and the critical real-time task by adjusting the task parameters.
This reduces the attacker's observability/capability (based on results from Section~{\ref{sec:::exp_coverage_ratio}}).
Furthermore, carefully designing and employing a harmonic taskset may also reduce ScheduLeak's inference precision since it creates multiple candidates in the last step of the algorithms.
However, any change in the task parameters must fulfill both real-time requirements as well as the required performance.
Thus, changing the task parameters may not always be applicable in real-time systems especially the legacy systems that are already deployed.

Since the proposed algorithms rely on the repeating patterns of the victim task, a potential countermeasure is to perturb the periodicity of the system schedule.
Yet, the measure will not be trivial due to the real-time constraints of real-time tasks.
A careless solution can easily cause some real-time tasks to miss their deadlines and lead to a system failure.
A randomization protocol for a rate-monotonic scheduler presented by Yoon {\etal}~{\cite{2016:taskshuffler}} is a good attempt on removing the scheduler side-channel for RTS.
However, their work is not applicable in our case because they only focus on the systems with all periodic tasks while our work is feasible on the systems with both periodic and sporadic tasks (which is the case in most real-time control systems).
Therefore, an effective solution would need to consider covering both task types.

\section{Related Work}
\label{sec:related}


The problem of information leakage via side-channels has been well studied in the literature. 
For instance, it has been shown that cache-based side-channels can be invaluable for information leakage \cite{Hu92, 
kelsey1998side, osvik2006cache, page2002theoretical}. With the advent of multi-tenant public clouds, cache-based side-channels and their defenses have received renewed interest (\eg \cite{Zhang2012, Apecechea_finegrain, ristenpart2009hey, Wang:2007}).
%
Other types of side-channels such as differential power analysis \cite{Jiang2014}, electromagnetic and frequency analysis \cite{Tiu05anew,Agrawal:2002:ES} have also been studied. Our focus here is on scheduler side-channels in real-time systems.

There has also been some work on information flow via schedulers.
The problem where two tasks leak private information by using a covert channel is studied \cite{ghassami2015capacity, embeddedsecurity:son2006}.
V\"{o}lp \emph{et al.} \cite{embeddedsecurity:volp2008,embeddedsecurity:volp2013} examined covert channels between different priorities of real-time tasks and proposed solutions to avoid such covert channels.
The methodologies for quantifying information leakage in schedulers are also studied \cite{kadloor2013,GongK14}.
While the previous works focused on covert channels in some schedulers, our focus is on novel side-channels in real-time schedulers where an unprivileged low-priority task can infer the execution timing behaviors of high-priority real-time task(s).
Also, in contrast to covert channels that rely on actively preempting real-time tasks, the side-channel in our work does not violate any real-time constraints and the observer task only observes its own behavior.

The integration of security into real-time schedulers is a developing area of research. Mohan \emph{et al.} \cite{embeddedsecurity:mohan2014} offered a consideration of real-time system security requirements as a set of scheduling constraints and introduced a modified fixed-priority scheduling algorithm that integrates security levels into scheduling decisions.
Pellizzoni \emph{et al.} \cite{embeddedsecurity:mohan2015} extended the above scheme to a more general task model and also proposed an optimal priority assignment method that determines the task preemptibility. 
Some researchers also have focused on defense techniques for real-time systems (\eg \cite{embeddedsecurity:mohans3a2013,yoonsecurecore2013,Zadeh:2014,YoonMHM:2015,xie2007schedulesecurity,lin2009rtssecurity,trilla2018cache}).
However, these solutions do not protect the systems from the \ATTCKNF attack. 

The most closely related solution is to adopt a randomization technique to obfuscate the schedule.
Yoon \etal~{\cite{2016:taskshuffler}} introduced a randomization protocol for a preemptive, fixed-priority scheduler that works with only (fully) periodic tasks. Kr{\"u}ger \etal~ {\cite{kruger2018vulnerability}} built upon this by proposing
an online job randomization algorithm for time-triggered systems.
Nevertheless, these solutions are not applicable to most real-time systems in which a preemptive, fixed-priority scheduler supports both periodic and sporadic real-time tasks. This leaves those systems still vulnerable to our \ATTCKNF attack.


\section{Conclusion}
\label{sec::concl}
Successful security breaches in control systems (including cyber-physical systems) with real-time properties can have catastrophic effects. 
In many such systems, knowledge of the precise timing information
of critical tasks could be beneficial to adversaries.
Our work in this paper demonstrates how to capture 
this schedule timing information in a {\em stealthy} manner 
-- \ie without being detected or causing any perturbations to the 
original system. Designers of such systems now need to be cognizant 
of such attack vectors and design the system to include countermeasures
that can thwart potential intruders. The end
result is that real-time systems
can be more robust to security threats overall.

\setcounter{section}{0}
\section*{Appendix}

\subsection{Algorithm for Reconstructing An Execution Interval}
\label{sec::reconstruct_an_execution_interval_algo}
\begin{algorithm}
\caption{Reconstructing An Execution Interval $\mathbb{E}(e_o, e_o', \BUDGET)$}
\label{algo:construct_ei}
\begin{algorithmic}[1]
\Statex \{$GT: global \ timer \ (system \ timer)$\}
\Statex \{$e_o: the \ worst \ case \ execution \ time \ of \ \tau_o $\}
\Statex \{$e_o': remaining \ execution \ time \ of \ present \ job \ of \ \tau_o $\}
\Statex \{$\BUDGET: maximum \ reconstruction \ duration \ in \ a \ period$\}
\Statex \{$t_{stop}: stop \ time \ when \ \BUDGET \ is \ met $\}
\Statex \{$t_{0}, t_{-1}: present \ and \ last \ time \ stamps$\}
\Statex \{$t_{begin}, t_{end}: start, \ end \ time \ of \ the \ detected \ interval$\}
%
\State $t_0 = GT$ 
\State $t_{begin} = t_0$
\State $t_{stop} = t_{begin} + e_o' - (e_o - \BUDGET)$
\State $duration = 0$
\While {$duration \leq loop \ execution \ time \ unit \ \textbf{and} \ t_0 < t_{stop}$}
\State $t_{-1} = t_0$
\State $t_0 = GT$
\State $duration = t_0 - t_{-1}$
\EndWhile 
\If {$duration > loop \ execution \ time \ unit$}
\State $t_{end} = t_{-1}$
\Else
\State $t_{end} = t_0$
\EndIf
\State $e_o' = e_o' - (t_{end} - t_{begin})$ \\
\Return \{$t_{begin}, \ t_{end}, \ e_o' $\}
\end{algorithmic}
\end{algorithm}

Algorithm~\ref{algo:construct_ei} takes the observer task's worst case execution time $e_o$, the
remaining execution time of the present instance $e_o'$ and the maximum reconstruction
duration $\BUDGET$ as inputs. It outputs the start time $t_{begin}$ and end time
$t_{end}$ of the detected execution interval as well as the updated remaining
execution time of the present instance $e_o'$. \emph{Lines 1 --4} 
initialize the variables to be used by the algorithm. Specifically, \emph{line
3} computes the point in time (the stop condition) when the algorithm reaches
the given maximum reconstruction duration $\BUDGET$ for the present instance.
\emph{Lines 5 -- 9} are used to detect a preemption and check if
current time exceeds the computed stop time point. These lines keep track of the
time difference between each loop by reading present time from a global timer (\ie a system timer)
and comparing it to the time from the previous loop. If the time difference
exceeds what we anticipate (the execution time of the loop), we know 
that a preemption occurred (\ie one or more higher-priority tasks executed). 
The loop exits either when a preemption is
detected or the present time exceeds the computed stop time point. \emph{Lines
10 -- 12} determine the end time of the reconstructing execution
interval. If the loop exits because of a preemption, the last time point
before the preemption is taken as the end time of that execution interval
(\emph{line 11}). Otherwise, no preemption is detected, all $\BUDGET$ duration
is used up and the latest time point is taken as the end time of the execution
interval (\emph{line 13}). \emph{Line 15} updates the remaining execution time
of the present job for the next invocation. \emph{Line 16} returns the
reconstructed execution interval (its start time $t_{start}$ and end time
$t_{end}$) and the updated remaining execution time.

\subsection{Schedule on A Schedule Ladder Diagram}
\label{sec::schedule_on_ladder}
To better understand the effectiveness of the schedule ladder diagram in profiling the victim task's behavior, we plot the original schedule of Example~\ref{example:analyze_ei} on the ladder diagram in Figure~\ref{fig:schedule_on_ladder_example} so that readers get a better sense of it. 
This is not a part of our algorithms, but it gives us an insight into the correlation of the behaviors between the observer task and the victim task. 
%

\begin{figure}[h]
\vspace{-0.5\baselineskip}
\centering
\includegraphics[width=0.65\columnwidth]{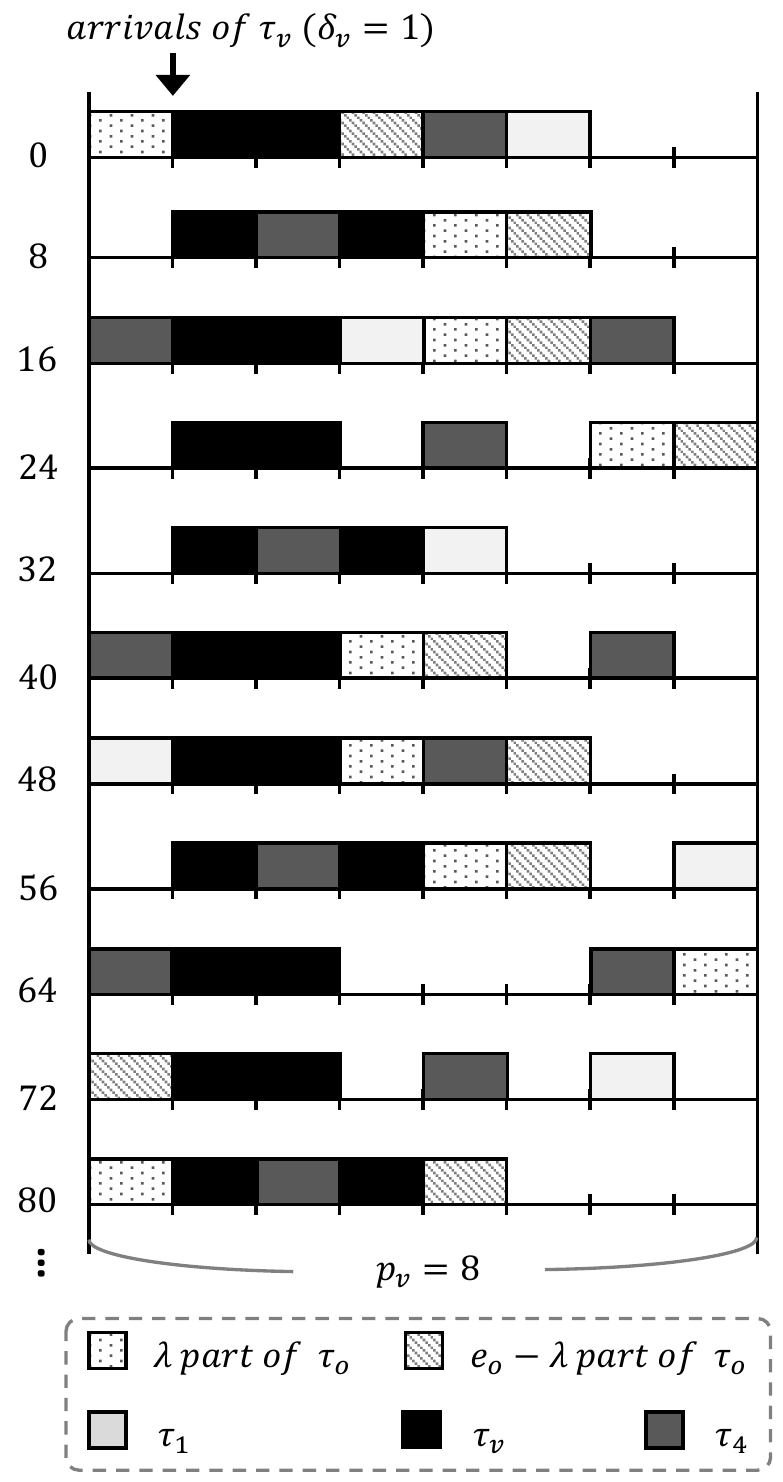}
\caption{The schedule of the task set in Example~\ref{example:analyze_ei} plotted on a schedule ladder diagram with a width of $p_v$. 
It shows that time columns $[1,3)$ are always occupied by either the victim task or other higher priority tasks. Therefore, the execution intervals of the observer task will not land on these time columns where the true arrival column is enclosed. This fact is what the proposed algorithms is based on.}
\label{fig:schedule_on_ladder_example}
\vspace{-0.5\baselineskip}
\end{figure}

\section*{Acknowledgments}
The authors would like to thank the anonymous reviewers and the shepherd for their valuable comments and suggestions. The authors would also like to thank Jesse Walker and Yeongjin Jang for their feedback on earlier versions of the paper. This work is supported by the National Science Foundation (NSF) under grant SaTC-1718952.
Any opinions, findings and conclusions or recommendations expressed in this publication are those of the authors and do not necessarily reflect the views of the NSF.

\bibliographystyle{IEEEtran}
\bibliography{main}

\end{document}